%% file: main.tex
\documentclass[11pt]{article}
\usepackage[T1]{fontenc}
\usepackage[utf8]{inputenc}
\usepackage[margin=1in]{geometry}
\usepackage{amsmath,amssymb}
\usepackage[hidelinks]{hyperref}
\usepackage{graphicx}
\usepackage{tikz}
\usetikzlibrary{arrows.meta,decorations.pathreplacing}
\usepackage[hidelinks]{hyperref}
\usepackage{amsmath,amssymb,amsthm}

\usepackage{tcolorbox}
\tcbuselibrary{breakable}

\newtcolorbox{problem}[1]{
  colback=black!2,
  colframe=black!45,
  colbacktitle=black!8,
  coltitle=black,
  title={#1},
  fonttitle=\bfseries,
  boxrule=0.5pt,
  arc=2pt,
  left=10pt,
  right=10pt,
  top=8pt,
  bottom=8pt,
  before skip=12pt,
  after skip=12pt,
  breakable
}

\newtheorem{theorem}{Theorem}[section]
\newtheorem{lemma}[theorem]{Lemma}
\newtheorem{proposition}[theorem]{Proposition}
\newtheorem{corollary}[theorem]{Corollary}
\theoremstyle{definition}
\newtheorem{definition}[theorem]{Definition}
\theoremstyle{remark}

\newcommand{\PP}{\mathrm{P}}

\title{The Two-Dimensional Majority Rule is P-Complete}
\author{%
  Pedro Montealegre \qquad Martín Ríos-Wilson\\[0.5em]
  \small Facultad de Ingeniería y Ciencias\\
  \small Universidad Adolfo Ibáñez\\[0.3em]
  \small \href{mailto:p.montealegre@uai.cl}{\texttt{p.montealegre@uai.cl}}
  \quad \href{mailto:martin.rios@uai.cl}{\texttt{martin.rios@uai.cl}}%
}
\date{}

\begin{document}
\maketitle

\begin{abstract}
We prove that prediction for the synchronous two-dimensional majority
rule is $\mathrm{P}$-complete under logspace many-one reductions,
resolving a problem open for almost three decades. In 1997, Moore
established $\mathrm{P}$-completeness in dimension three and higher
and conjectured that the two-dimensional case admits an efficient
parallel algorithm. We consider an $n\times n$ torus on which each
cell follows the majority of its four nearest neighbors and retains
its current state in a tie. Given an explicitly specified initial
configuration and a time $T$, prediction asks
whether a designated cell is in state $+1$ at time $T$.

The central challenge is to make independent information streams
cross in the plane under a homogeneous, monotone, diffusive local
rule. We overcome this obstacle through a temporal encoding of
Boolean values: both values generate activity, but are distinguished
by signal arrival times. This encoding yields a crossover that
preserves both values and composes with wires, duplication, and
AND and OR gates to simulate arbitrary monotone Boolean circuits.
Thus a local rule that favors agreement can nevertheless transport,
combine, and cross independent information in two dimensions.

We also prove $\mathrm{P}$-completeness for deciding whether a
designated cell ever reaches $+1$, without a prescribed time horizon.
The prediction result extends to every uniform symmetric signed
majority rule on the same neighborhood, including the minority rule.
\end{abstract}

\section{Introduction}

How much computation can hide inside a majority rule?
A majority cellular automaton consists of a lattice of cells, each
carrying one of two states. At each time step, all cells update
simultaneously according to the majority state in their neighborhood.
The rule is local, monotone, and strongly biased toward agreement:
neighboring regions tend to reinforce one another rather than preserve
independent pieces of information. Majority dynamics also arise
naturally in statistical physics, through their relation with spin
evolution in the Ising model at zero temperature~\cite{moore1997majority}.

A natural way to measure the computational power hidden in these
dynamics is through \emph{prediction}: given an initial configuration,
how difficult is it to determine the state of a designated cell after
a prescribed number of updates? If configurations can encode Boolean
circuits and the evolution evaluates those circuits, then prediction
can reproduce the computations performed by the circuits themselves.
The complexity of prediction therefore provides a way of quantifying
the computational power of the dynamics.

We study this question for the synchronous majority rule on the
two-dimensional square lattice with periodic boundary conditions.
Each cell has four nearest neighbors; it follows their majority and
retains its current state in a tie. Equivalently, each update takes
the majority among five values: the state of the cell itself and those
of its four neighbors.

The long-term behavior of this dynamics is well understood. The
theorem of Goles and Olivos~\cite{goles1980periodic} implies that every
trajectory on a finite torus eventually reaches a fixed point or a
cycle of length two, and the decreasing-energy argument
of~\cite{goles1985decreasing} bounds the transient by $O(n^2)$ steps
on an $n\times n$ torus. Consequently, prediction belongs to
$\mathrm{P}$ even when the time horizon is encoded in binary: after
simulating the transient, only the phase of the eventual period-two
orbit matters.

The question is therefore where prediction lies \emph{within}
polynomial time. Could it be solved in logarithmic space? Could it be
efficiently parallelized? Or can the evolution encode arbitrary
polynomial-time computation? These possibilities are separated by the
 inclusions
\[
    \mathrm{L}\subseteq\mathrm{NL}\subseteq\mathrm{NC}
    \subseteq\mathrm{P}.
\]
Here $\mathrm{NC}$ captures problems admitting polylogarithmic-time
parallel algorithms using polynomially many processors. A problem is
$\mathrm{P}$-complete if it belongs to $\mathrm{P}$ and every problem
in $\mathrm{P}$ reduces to it by a logspace many-one reduction. Thus
P-completeness places a problem at the hardest end of polynomial time:
a logarithmic-space algorithm would imply $\mathrm{L}=\mathrm{P}$,
while an efficient parallel algorithm would imply
$\mathrm{NC}=\mathrm{P}$. We refer to
\cite{greenlaw1995limits} for the relevant complexity classes and
reductions.

In 1997, Moore~\cite{moore1997majority} proved that prediction for
majority-vote cellular automata is P-complete in dimension three and
higher. His construction embeds monotone Boolean circuits into the
dynamics: wires transport Boolean values, local configurations
implement logic gates, and a designated cell records the circuit
output. Since monotone circuit evaluation is
P-complete~\cite{goldschlager1977circuits}, the resulting dynamics can
encode arbitrary polynomial-time computation.

The two-dimensional case presents a fundamentally different
difficulty. In three dimensions, independent information flows can be
kept physically separated: one wire can simply pass over another.
In the plane, any simulation of a non-planar circuit must make two
signals meet in the same region and nevertheless emerge with their
values intact. For majority dynamics this is particularly delicate.
The same local interaction that propagates a signal also favors
agreement between nearby cells, so two crossing signals naturally
interfere rather than behave as independent channels.

There is also a complexity-theoretic reason to expect this geometric
restriction to matter. While monotone circuit evaluation is
P-complete in general, its planar restriction belongs to
$\mathrm{NC}$~\cite{delcher1995planar}. Based on this distinction,
Moore conjectured that prediction for the two-dimensional majority
rule admits an efficient parallel algorithm.

Almost three decades later, this conjecture remains open for the
standard majority rule. In the meantime, a sequence of works has
attacked closely related versions of the problem, obtaining
P-completeness or efficient parallel algorithms after modifying the
interaction graph, the local rule, the neighborhood, the
irreversibility of the dynamics, or the updating scheme; see
Section~\ref{sec:related-work}. Yet the homogeneous, unsigned,
synchronous majority rule on the two-dimensional square lattice has
remained unresolved.

We settle Moore's conjecture negatively, unless
$\mathrm{P}=\mathrm{NC}$. Prediction for the synchronous majority rule
on the two-dimensional torus is $\mathrm{P}$-complete. The key is a
temporal encoding of Boolean information. A Boolean value is not
represented solely by whether a signal is present, but by
\emph{when} a designated cell becomes active. This temporal degree of
freedom lets two information flows interact locally and still emerge
from a crossing with their values preserved. Together with wires,
duplication, and AND and OR gates, this yields a simulation of
arbitrary monotone Boolean circuits.

The significance of the construction is therefore not that complex
behavior can persist for a long time. Rather, it shows that a
homogeneous, monotone, diffusive local rule can transport,
combine, and even cross independent information streams in the plane,
with enough fidelity to realize P-complete computation.

\subsection{Prediction problems}
\label{subsec:prediction-problems}

We now define the problems formally. Let
$ V_n=(\mathbb{Z}/n\mathbb{Z})^2$, $n\geq 3,$
and let $x\in\{-1,+1\}^{V_n}$ be a configuration. Writing $u\sim v$
when $u$ and $v$ are nearest neighbors in the torus, define the
synchronous majority map $F$ by
\[
F(x)_v=
\operatorname{sgn}\!\left(
    x_v+\sum_{u\sim v}x_u
\right).
\]
The sum contains five terms in $\{-1,+1\}$ and is therefore never
zero. Equivalently, a cell follows the majority of its four neighbors
and preserves its state when those neighbors are evenly split.

Our first problem asks for the state of a designated cell at a
specified time.

\begin{problem}{Majority Prediction}
\textbf{Input:}
A configuration $x\in\{-1,+1\}^{V_n}$ given explicitly,
a target cell $v\in V_n$, and a time $T\geq0$ encoded in binary.

\medskip
\textbf{Question:}
Is $F^T(x)_v=+1$?
\end{problem}

We also consider prediction without a prescribed time horizon. Here
the question is whether the designated cell ever reaches state $+1$.

\begin{problem}{Majority Stability}
\textbf{Input:}
A configuration $x\in\{-1,+1\}^{V_n}$ given explicitly
and a target cell $v\in V_n$.

\medskip
\textbf{Question:}
Does there exist $t\geq0$ such that $F^t(x)_v=+1$?
\end{problem}

For a target initially in state~$-1$, \textsc{Majority Stability}
asks whether the cell ever changes to~$+1$; if the target is initially
in state $+1$, the answer is immediate. As discussed above, the
period-two property and the $O(n^2)$ transient bound imply that both
problems belong to $\mathrm{P}$.

\subsection{Our results}

Our main result settles Moore's two-dimensional prediction problem.
All completeness statements are under logspace many-one reductions.

\begin{theorem}
\label{thm:majority-prediction}
\textsc{Majority Prediction} is $\mathrm{P}$-complete.
\end{theorem}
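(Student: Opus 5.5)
Membership in $\mathrm{P}$ follows from the facts recalled above. By the Goles--Olivos theorem every trajectory reaches a fixed point or a 2-cycle, and the energy argument bounds the transient by $O(n^2)$ steps. So we simulate $\min(T, Cn^2+2)$ steps, shifted so the parity of $T$ is kept, and read off $v$; this takes polynomial time even though $T$ is given in binary. The real work is hardness. We reduce from the Monotone Circuit Value Problem, which is $\mathrm{P}$-complete~\cite{goldschlager1977circuits}.

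First we normalize the circuit in logspace. We take a layered monotone circuit with fan-in two and fan-out at most two, embedded in a grid drawing: gates sit at grid points, wires run along rows and columns, and crossings are isolated points where one horizontal and one vertical wire meet. Each circuit element will be realized by a gadget, a fixed finite pattern of $\pm1$ cells on a background of $-1$. These gadgets are a wire segment, a corner, a duplicator, AND, OR, and a crossover. The background must be stable, so every gadget cell not carrying a signal has to be locally stable under $F$. Thick stripes of width at least 2 of $+1$ or $-1$ are stable, since each cell then has at least 2 equal neighbors plus itself.

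The encoding is temporal. Every wire eventually fires, meaning a designated cell turns $+1$. A value of $1$ fires at a reference time $t_0$, and a value of $0$ fires at $t_0+\delta$. This can be arranged by a delayed "default" pulse that is injected everywhere and joined by an OR, so that absence of a true signal becomes a late signal. A signal moves along a wire as a front of $+1$ flipping cells through a corridor; for example, a cell with one $+1$ neighbor and two pre-set $+1$ walls flips. Gates act as threshold logic on arrival times. OR is a threshold-1 junction, and it outputs the earliest arrival. AND is a threshold-2 junction, and it outputs the latest arrival. Both are monotone in the times, so they respect the encoding. Because every value produces activity, the crossover only has to transmit two fronts through a shared region while preserving each front's timing. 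We would design it so that the horizontal front passes the center first and the vertical front is delayed by a fixed offset. Two mechanisms make the front in each output arm depend only on its own input. First, local thresholds need two aligned neighbors, so a single transverse arrival cannot trigger an output arm. Second, once the center cell is $+1$, a later front still propagates on through it. We then check by finite case analysis over the four timing combinations that each output fires at time (input time $+$ constant).

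The main obstacle is exactly this crossover, together with keeping timing consistent across the circuit. A signal in one arm could prematurely activate the other arm, and timing slack could accumulate until the gaps between $0$ and $1$ collapse. The remedy is to pad all paths with delay lines, which are meanders of wire, so that every gate's inputs arrive synchronized up to the fixed $\delta$. We also add a reshaping gadget after each crossover and gate: an AND with a clock pulse that resets each timing to exactly $t_0$ or $t_0+\delta$ for that layer.

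Finally we set the target cell $v$ at the output wire. We choose $T$ between the $1$-arrival time and the $0$-arrival time there. Then $F^T(x)_v=+1$ if and only if the circuit outputs $1$. The placement and all gadgets are computable in logspace from the circuit, which completes the reduction.
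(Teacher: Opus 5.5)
Your membership argument is fine. The skeleton of your reduction also parallels the paper: MCVP, a grid layout with isolated crossings, an early/late temporal code, OR as earliest and AND as latest arrival, and a query time between the two arrival times. The gap is the crossover, which is the whole content of the theorem. You only list properties you would like it to have and defer to a case analysis you never carry out.

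Worse, the design principle you commit to rules such a gadget out unless $\mathrm{P}=\mathrm{NC}$. A fixed-point background in which signals are $+1$ fronts that invade $-1$ corridors and stay there amounts to $x\le F(x)$. Since $F$ is monotone, every trajectory is then nondecreasing and coincides with freezing strict-majority dynamics. A $-1$ cell flips exactly when at most one of its four neighbours is $-1$. So the evolution is synchronous peeling of vertices of degree at most one from the subgraph induced by the $-1$ cells. Cells in its 2-core never flip, every other cell lies in a tree, and its flip time is determined by tree heights. Prediction in your regime is therefore in $\mathrm{NC}$ (compare the freezing-majority results of \cite{goles2013bootstrap}).

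You can also see the failure locally. Once your centre cell turns $+1$ it stays $+1$ and feeds both output arms, whichever input lit it. Making an arm ignore a single transverse arrival would require a second $+1$ neighbour carrying that arm's own input, and that neighbour can only be reached by crossing the other route. The requirement is circular.

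The missing idea is the one the paper's gadgets rest on. The support strips are checkerboards with a period-two idle background, and an activated wire oscillates instead of freezing. A value is read only in the phase of its reading time: $-1$ at $\tau-2$, $2b-1$ at $\tau$, $+1$ at $\tau+2$. Eventual activation therefore carries no information, and the dynamics is genuinely non-monotone in time, which is exactly what escapes the freezing obstruction above.

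On that basis the paper exhibits explicit wire, turn, crossover, AND, OR and duplicator patterns and checks them on every input assignment. It packages them into $153\times153$ blocks with common delay $\Delta=298$. It then handles parity bookkeeping that your non-oscillating gadgets never encounter. Block $(i,j)$ uses background phase $(i+j)\bmod 2$, which is consistent because $k$ and $\Delta+1$ are odd. The boundary constants are realized as clock signals, with $\mathrm{R}_0$ producing a $0$ by a two-cell detour. Without an explicit, verified, non-freezing crossover, your proposal does not establish hardness.
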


The proof is a reduction from the Monotone Circuit Value Problem.
Its central difficulty is geometric: an arbitrary circuit cannot be
embedded in the plane without crossings. We overcome this obstacle
using the temporal encoding described above, which allows us to build
a crossover together with the other gadgets required for circuit
simulation.

The construction also yields hardness when no reading time is
specified. We attach a temporal fuse to the output of the simulated
circuit. The fuse converts the distinction between an early and a
late signal into a persistent change of state.

\begin{corollary}
\label{cor:eventual-prediction}
\textsc{Majority Stability} is $\mathrm{P}$-complete.
\end{corollary}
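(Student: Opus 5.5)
Membership in $\mathrm{P}$ has already been argued: every trajectory reaches a fixed point or a $2$-cycle within $O(n^2)$ steps, so \textsc{Majority Stability} is decided by simulating $F$ for a polynomial number of steps (plus two more to cover the period) and checking whether $v$ ever takes the value $+1$. The work is hardness. I would not reduce from scratch. Instead, I would reduce from \textsc{Majority Prediction}, or more precisely reuse the logspace reduction from the Monotone Circuit Value Problem behind Theorem~\ref{thm:majority-prediction}, and modify the output end of the construction.

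\textbf{Step 1: isolate the timing property of the output.} The circuit simulation in Theorem~\ref{thm:majority-prediction} encodes Boolean values temporally. Both values produce activity at the output wire, but a true output arrives at the designated output cell by some time $t_1$, while a false output arrives only at a strictly later time $t_0>t_1$. Both times are computable in logspace from the circuit layout, since they are sums of gadget delays along the layout. I would extract from the construction a clean statement of this form: there is a designated cell $w$ and logspace-computable times $t_1<t_0$ such that the first activation of $w$ happens at $t_1$ if the circuit outputs $1$ and at $t_0$ if it outputs $0$. I would also record that the background region is stable, so nothing else ever becomes $+1$ near the output.

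\textbf{Step 2: build the temporal fuse.} This is a gadget attached at $w$ that turns ``early'' versus ``late'' into ``ever'' versus ``never''. My plan is a race. A clock wire, launched from a source in the initial configuration, runs along a path whose length is tuned in logspace so that its front reaches a junction $J$ at a time strictly between the two possible arrival times of the output front. When the clock front arrives first, it converts a short segment of the output channel into a configuration that can no longer propagate: it blocks the channel by turning it into a stable $-1$-dominated region, or by destroying the seed structure a signal needs to advance. Beyond $J$, a wire leads to a target cell $v$ that starts at $-1$.
\begin{itemize}
\item If the output front arrives before the clock, it passes $J$ and propagates to $v$, which becomes $+1$.
\item If it arrives after the clock, the channel is already blocked, and $v$ stays $-1$ forever.
\end{itemize}
Monotonicity of the rule makes a race gate of this kind natural: a region can be designed to be flipped by whichever front comes first, after which it is frozen.

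\textbf{Step 3: verify ``never''.} I expect the main obstacle to be proving that in the late case $v$ never reaches $+1$ at any time, not just at a chosen time. This requires that, after the fuse blows, the remaining dynamics near $v$ settle into a fixed point or a $2$-cycle in which $v$ is always $-1$. In particular, period-two oscillations elsewhere in the construction, or the arrival of the late signal, must not leak to $v$. I would handle this with an energy and shielding argument: surround the path to $v$ by rectangular $-1$ blocks with convex corners. Such blocks are fixed under the four-neighbor majority rule, because each boundary cell has at most one $+1$ neighbor and so stays $-1$. I would then check by a finite case analysis that the blown fuse leaves such a stable shield. Since all added gadgets are of fixed shape with logspace-computable positions, the whole reduction remains logspace, and the corollary follows.
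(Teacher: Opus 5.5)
\textbf{Verdict: genuine gap.} Your architecture matches the paper's. You reuse the logspace layout of the $\textsc{MCVP}$ reduction, replace the terminator at the east output of block $(m,m)$ by a short channel into a temporal fuse, and tune a clock in logspace so that an output arriving at $T_E$ is captured and one arriving at $T_E+2$ is not. The gap is the fuse itself. It is the only new ingredient of this corollary, you never construct it, and the mechanism you sketch cannot work.

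\textbf{Why the proposed clock fails.} A clock ``wire launched from a source'' is an advancing $+1$ activation front, like the signals of Section~\ref{sec:constructions}. Since $F$ is monotone, adding $+1$ cells can never force a cell to $-1$, so such a front cannot ``block'' the channel. More precisely, suppose the clock only gains $+1$ cells over time and the rest of the fuse is $2$-periodic.
\begin{itemize}
\item The fuse state met by the late signal at $t_1+2$ then dominates the state met by the early signal at $t_1$.
\item The subsequent input streams agree up to a shift by two.
\item By monotonicity, $v$ reaches $+1$ in the late case whenever it does in the early case.
\end{itemize}
So monotonicity does not make a race gate ``natural''. It forces the clock to be a structure that \emph{loses} $+1$ cells over time.

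\textbf{What the paper does instead.} The clock is a width-one path of $+1$ cells that burns down from its free end.
\begin{itemize}
\item Interior cells see three $+1$ votes (themselves and two path neighbours) and keep their state.
\item The free end sees only two, so it becomes $-1$.
\item Outside cells see at most two and stay $-1$.
\end{itemize}
This is a $-1$ front moving at unit speed. The compact armed fuse of Figures~\ref{fig:fuse-idle} and~\ref{fig:fuse-dynamics} is checked to turn an arrival at local time $3$ into a permanent $+1$ at $M$, while rejecting an arrival at time $5$ or no arrival. The proof extends the path by $2\ell=T_E-3$ cells, so after $2\ell$ steps the fuse is exactly in its armed configuration when the output arrives.

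\textbf{Further details your plan misses.}
\begin{itemize}
\item Since $t_0=t_1+2$, the race has one-step resolution. The extension must have even length because the rest of the fuse has period two, so $T_E$ must be odd. The paper arranges this with the phase-adjustment piece of Section~\ref{subsec:phase-converter}.
\item The clock has length $\Theta(T)$, which for the paper's constants exceeds the side of the tiled square. It must therefore run along an otherwise empty column, with $n$ enlarged accordingly.
\item A minor point in Step~3: corner cells of a $-1$ rectangle have two outside neighbours, so your stated reason is wrong. The rectangle is stable because every cell has at least two neighbours inside it, hence at least three $-1$ votes.
\end{itemize}
Without an explicit fuse configuration that resolves the one-step race, and the finite check of it, Step~3 has nothing to verify.
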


Finally, the equivalence established in~\cite{goles2018signed}
transfers the prediction result to uniform symmetric signed majority
rules. These include the minority rule, which follows the minority
among the four neighbors and reverses the current state in a tie,
and the \emph{crazy-spin} rule, which follows the majority of the
neighbors but reverses its current state in a tie.

\begin{corollary}
\label{cor:signed-prediction}
Prediction for every uniform symmetric signed majority rule
in two dimensions is $\mathrm{P}$-complete. In particular,
this holds for the minority and crazy-spin rules.
\end{corollary}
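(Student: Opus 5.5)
We derive Corollary~\ref{cor:signed-prediction} from Theorem~\ref{thm:majority-prediction} by a logspace reduction, using the equivalence of~\cite{goles2018signed}. That equivalence relates the dynamics of any uniform symmetric signed majority rule on the four-neighbor square lattice to the standard majority rule through a fixed, locally computable transformation. The plan is to make this transformation explicit, check that it is a logspace map on instances, and conclude.

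\textbf{Step 1: membership in $\mathrm{P}$.} A uniform symmetric signed majority rule has weights $w_{uv}=w_{vu}\in\{-1,+1\}$ on neighbor pairs and a signed self-weight. Because the weights are symmetric, the Goles--Olivos theorem~\cite{goles1980periodic} still applies, so every trajectory reaches a fixed point or a 2-cycle. The energy argument of~\cite{goles1985decreasing} bounds the transient polynomially, since all weights are $\pm1$. Hence prediction with $T$ in binary is solved by simulating the transient and reading off the parity of $T$.

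\textbf{Step 2: the gauge transformation.} Uniformity means every edge carries the same sign $\sigma\in\{\pm1\}$ and every self-loop the same sign $\tau\in\{\pm1\}$. When $\sigma=+1$, $\tau=+1$, we have the majority rule itself. For the other three cases we conjugate.

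For $\sigma=-1$ we use a checkerboard gauge, exploiting bipartiteness of the grid. Let $\chi_v=(-1)^{i+j}$ for $v=(i,j)$, and set $y_v(t)=\chi_v\,x_v(t)$ with an extra alternating-in-time factor $(-1)^t$ where needed. A direct computation shows that $y$ obeys a rule with flipped edge sign and possibly flipped self-sign.

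For $\tau=-1$ alone (the crazy-spin rule), we use the time factor $(-1)^t$ combined with $\chi$.

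The key identity is that for the map $x\mapsto D x$ with $D$ diagonal $\pm1$, we have $\operatorname{sgn}(DWx)=D\operatorname{sgn}(Wx)$. Together with $(-1)^t$ flipping the global sign, the four sign patterns are related as follows: majority $\leftrightarrow$ minority via $(-1)^t$ alone, and majority $\leftrightarrow$ crazy-spin via $\chi_v(-1)^t$. I will verify each case by one line of algebra; I would follow~\cite{goles2018signed} for the exact pairing.

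\textbf{Step 3: reduction and the main obstacle.} Given a \textsc{Majority Prediction} instance $(x,v,T)$ on $V_n$, the reduction outputs the following:
\begin{itemize}
\item the configuration $x'=Dx$, which is pointwise computable in logspace;
\item the same cell $v$ and time $T$;
\item the answer bit, flipped according to the known sign $D_v(-1)^T$.
\end{itemize}
Since the question is ``$=+1$'' and the sign is known, we adjust by asking about the correct value. Because both the questions and their complements lie in $\mathrm{P}$, and many-one reductions must preserve yes-instances, we choose between reading ``$=+1$'' and ``$=-1$''. The latter is handled by the global symmetry $x\mapsto -x$, which commutes with every signed majority rule. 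We therefore negate the whole configuration instead, which remains logspace.

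The main obstacle is that the checkerboard $\chi$ is only well defined on the torus when $n$ is even. I will ensure this by padding: the construction behind Theorem~\ref{thm:majority-prediction} can be taken with even $n$, embedding the circuit in a larger all-$(-1)$ background that is stable under majority. Alternatively, I can note that the hardness proof only uses a bounded region, so the parity of $n$ is free.

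Composing these reductions with Theorem~\ref{thm:majority-prediction} gives $\mathrm{P}$-hardness for each uniform symmetric signed rule, and in particular for the minority and crazy-spin rules.
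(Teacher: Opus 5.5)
Your mechanism is the paper's: a switching pattern $\varepsilon$, a time factor when the self-weight is negative, and a global negation $x\mapsto -x$ so that the target reads $+1$ exactly when the majority instance does. You also rely on an even torus side, which the construction can be assumed to have. Your pairings for the minority rule ($(-1)^t$ alone) and the crazy-spin rule ($\chi(-1)^t$) are correct, so the ``in particular'' clause is covered.

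\textbf{The gap is in Step~2.} There you assert that uniformity forces every edge to carry the same sign. Uniformity only says that the weight $w(d)$ of the interaction between $v$ and $v+d$ does not depend on $v$, and symmetry says $w(d)=w(-d)$. Horizontal and vertical edges may therefore carry different signs. The family consists of the eight rules $G_{a,h,b}$, and you omit the four anisotropic ones, $G_{\pm,+,-}$ and $G_{\pm,-,+}$. Your tools cannot reach them: the checkerboard sends $(a,h,b)$ to $(a,-h,-b)$, and the time factor sends it to $(-a,-h,-b)$. Starting from majority you therefore only obtain rules with $h=b$.

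\textbf{The repair is an axis-dependent gauge.} Take $\varepsilon_{(i,j)}=p^iq^j$ with $p=ah$ and $q=ab$; this is a stripe pattern when $h\neq b$. Then $h\,\varepsilon_{v\pm e_1}=b\,\varepsilon_{v\pm e_2}=a\,\varepsilon_v$. So each of the five terms in the local sum of $G_{a,h,b}(\varepsilon\sigma)_v$ carries the factor $a\varepsilon_v$, giving $G_{a,h,b}(\varepsilon\sigma)=a\,\varepsilon F(\sigma)$. Induction with the oddness of $F$ then yields $G_{a,h,b}^t(\varepsilon\sigma)=a^t\varepsilon F^t(\sigma)$.

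This one identity handles all eight rules at once, so you need not look up pairings in \cite{goles2018signed}. Stripes need $n$ even just like the checkerboard, and your padding already provides that.

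Relatedly, as written, the correcting sign $D_v(-1)^T$ in Step~3 is right only when the self-weight is $-1$. In general it is $a^T\varepsilon_v$, and the reduction outputs $a^T\varepsilon_v\,\varepsilon x$. For the antiferromagnetic rule, your literal formula would wrongly negate the configuration at odd $T$.
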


\paragraph{Interactive simulator.}
As supplementary material, we provide an interactive browser-based simulator
of the majority dynamics and circuit gadgets developed in this paper. The
simulator allows the reader to inspect the evolution of the individual gadgets
and to experiment with the circuit constructions used in the proof. It is
intended as a visual aid complementing the formal arguments and is available at
\url{https://mrioswilson.codefloe.page/majority2d-circuit-simulator/}.

\subsection{Related work}
\label{sec:related-work}
This work is contextualized within a long tradition of results on
majority dynamics, ranging from classical theorems on the asymptotic
behavior of threshold networks, a probabilistic approach including random initial conditions and random graphs,
computational complexity results, and more recent approaches based on
distributed verification. Over several decades, these works have
progressively identified structural conditions under which majority
dynamics can be understood, predicted, or efficiently verified.
Within this broader tradition, the authors
have pursued a line of research on the computational complexity of
majority dynamics, considering freezing rules, planar networks, signed
interactions, alternative updating schemes, and different neighborhood
geometries. In this work, we consider two forms of the prediction problem. In the
time-bounded version, an initial configuration and a time horizon $T$
are given, and the task is to determine the state of an objective cell
after $T$ synchronous updates. We also consider a version without a
fixed time horizon, in which the question is whether an objective cell
will ever change from its initial state during the evolution. The
following results provide the main context for these problems.

\paragraph{Dynamical properties.}
The long-term behavior of majority systems was studied before the
complexity of predicting their transient evolution. For finite
symmetric threshold networks, Goles and
Olivos~\cite{goles1980periodic} established the classical period-two
property: every trajectory eventually reaches a fixed point or a
cycle of period two. Moran~\cite{moran1995periodtwo} extended this
line to infinite locally finite graphs, proving the period-two
property under bounded degree and a subexponential growth condition.
Ginosar and Holzman~\cite{ginosar2000majority} subsequently developed
this analysis for infinite graphs and established local period-two
behavior under related growth conditions. Their framework includes,
in particular, finite-dimensional grid graphs such as
$\mathbb{Z}^d$. These results impose strong restrictions on the
asymptotic behavior of majority dynamics, while still allowing
nontrivial transient evolutions whose prediction may be
computationally difficult.

More recent work has investigated stabilization times on specific
graph families. Benjamini, Zakharov, and
Zhukovskii~\cite{benjamini2026trees} determined the exact worst-case
stabilization time of majority dynamics on arbitrary finite trees.
They also analyzed perfect cubic trees with uniformly random initial
configurations, obtaining asymptotic bounds on their typical
stabilization time. These results further illustrate how properties
of the underlying graph constrain the temporal behavior of majority
dynamics.

\paragraph{Computational complexity.}
The computational complexity of majority dynamics was initiated by
Moore~\cite{moore1997majority}, who proved that exact prediction for
majority-vote cellular automata is P-complete in dimension three and
higher. His construction simulates monotone Boolean circuits, using
the additional spatial dimension to route signals, and left the
corresponding prediction problem on the two-dimensional square
lattice open.

A first family of tractable cases was obtained for irreversible
variants of majority dynamics. For freezing majority, where a cell in
state $1$ remains in that state forever, Goles, Montealegre, and
Todinca~\cite{goles2013bootstrap} established an NC algorithm for
eventual activation under strict majority on graphs of maximum degree
at most four, a class that includes the square lattice with von Neumann
neighborhood. The same work obtained P-completeness results beyond the
corresponding degree threshold. Goles et
al.~\cite{goles2017freezing} subsequently proved that eventual
activation for freezing non-strict majority on the square lattice
also belongs to NC, using a characterization in terms of triconnected
components. Their broader study of binary freezing totalistic
cellular automata~\cite{goles2020totalistic} identified both NC and
P-complete problems within this family, providing a more general
classification of the effect of the local threshold rule.

The dependence on the precise neighborhood geometry has been explored
further in two dimensions. Concha-Vega et
al.~\cite{conchavega2026lshaped} studied freezing majority with
L-shaped neighborhoods on the square lattice. They obtained NC
algorithms for L-shaped neighborhoods of size two and P-completeness
results for other neighborhoods in the same family.

A complementary direction considered settings in which majority
dynamics can simulate circuits despite geometric restrictions. Goles
and Montealegre~\cite{goles2015planar} established P-completeness for
prediction on planar majority networks. In particular, their
construction shows that restricting the interaction graph to be
planar still permits the propagation and crossing mechanisms needed
for circuit simulation. Goles et al.~\cite{goles2018signed} later
studied signed majority cellular automata on the two-dimensional
grid, obtaining P-complete prediction and Turing universality for
uniform asymmetric and non-uniform symmetric rules.

The updating scheme constitutes another source of computational
complexity. Goles and Montealegre~\cite{goles2014updating} studied
threshold automata networks under synchronous, sequential, and
block-sequential updating schemes and obtained different complexity
bounds depending on the schedule. For majority networks under
block-sequential updates, Goles et al.~\cite{goles2016pspace} proved
PSPACE-completeness for deciding whether a designated vertex
eventually changes state. In the asynchronous setting, Goles and
Montealegre~\cite{goles2020asynchronous} considered the problem of
deciding whether a designated cell can change state with nonzero
probability under random asynchronous updates. They showed that this
problem belongs to NC on the two-dimensional square lattice with von
Neumann neighborhood, whereas it is NP-complete in dimension three
and higher.

Several of these results can be viewed as successive approaches to
Moore's two-dimensional question, obtained by restricting the
dynamics or modifying some aspect of the underlying model, such as
irreversibility, the interaction structure, or the updating scheme.
Complementarily, the present work returns to the homogeneous,
unsigned, synchronous majority rule on the two-dimensional square
lattice itself, with the same tie-preserving update at every cell,
and studies the worst-case complexity of exact prediction at a
specified time. We establish that this prediction problem is
P-complete.

\paragraph{Probabilistic setting.}
A parallel line of research studies majority dynamics under random
initial configurations or random underlying graphs. These works make
extensive use of probabilistic methods to investigate questions such
as information retention, consensus, coexistence, and typical
convergence times.

Tamuz and Tessler~\cite{tamuz2015retention} studied whether information
encoded in an initial collection of independently biased opinions is
preserved by majority dynamics. They obtained information-retention
results for broad families of bounded-degree finite and infinite
graphs and related the analysis to zero-temperature Ising dynamics.
Benjamini et al.~\cite{benjamini2016convergence} considered majority
dynamics on infinite unimodular transitive graphs and on random finite
graphs. For invariant random initial configurations on the former,
they proved almost-sure convergence or eventual period-two behavior.
They also obtained contrasting consensus and disagreement results for
different random graph models.

For two-dimensional cellular automata, G{\"a}rtner and
Zehmakan~\cite{gartner2021majority} studied majority and biased
majority dynamics on two-dimensional tori under independent random
initial configurations. They identified threshold regimes associated
with consensus and coexistence and analyzed the convergence time in
these regimes.

Random graph topologies have received considerable attention as well.
The work of Benjamini et al.~\cite{benjamini2016convergence} provided
results on consensus and disagreement for random graph models, while
more recently Goel and Sah~\cite{goel2026sparse} studied majority
dynamics on sparse $\mathbb{G}(N,p)$ graphs and obtained rapid
unanimity results, with high probability, over sparse parameter
regimes.

These probabilistic results address a different but complementary
aspect of majority dynamics. They describe typical behavior when the
initial configuration, the underlying graph, or both are sampled from
a probability distribution. Our setting instead concerns the
computational complexity of prediction in the worst case: the lattice
is fixed, the initial configuration is given explicitly, and the
complexity question asks how difficult it is to determine the state of
a designated cell after a prescribed number of updates.

\section{The gadgets}
\label{sec:constructions}

Our P-hardness reduction starts from the \emph{Monotone
Circuit Value Problem} (MCVP). Given an instance of MCVP, it constructs an initial configuration,
an output cell, and a reading time at which this cell is in state
$+1$ if and only if the circuit evaluates to $1$. We describe the elementary
pieces and the square patterns here. Section~\ref{sec:main-proof}
uses the grid-circuit construction of~\cite{goles2021asynchronous}
and explains the connections and reading times.

The construction has two levels. We first explain how information is
encoded and transported, using straight wires, crossings, and
right-angle turns. We then introduce AND and OR gates and a signal
duplicator. Each elementary piece is presented through its source-free
background and its evolution for every input assignment. Finally, we
assemble these pieces into uniform $k\times k$ blocks, with inputs on
the north and west sides and outputs on the south and east sides.

The common principle is a \emph{temporal encoding} of Boolean values.
A false value is not represented by the absence of a signal: both
values produce activity, but they are distinguished at a prescribed
reading time. These reading times are fixed by the construction,
independently of the input values.

\subsection{Wires and temporal encoding}
\label{subsec:wires}

We distinguish the physical state of a cell from the Boolean value
carried by a signal. Throughout Sections~\ref{sec:constructions}
and~\ref{sec:main-proof}, $x_v(t):=(F^t(x))_v\in\{-1,+1\}$
denotes the physical state of cell $v$ at time $t$, as in the figures;
a cell in state $+1$ is \emph{active}. Boolean values belong to
$\{0,1\}$ and are recovered by observing a designated cell at a
designated time. Uppercase $A,B$ denote local Boolean inputs in the
elementary figures; the uniform blocks below use $a$ for the north
input and $b$ for the west input. These are labels for Boolean values,
not additional physical states.

Each signal cell $v$ along an ordinary wire has a prescribed
\emph{reading time} $\tau(v)$. If the wire carries $b\in\{0,1\}$, then
\[
    x_v(\tau(v))=2b-1.
\]
Thus an active cell at its reading time represents \textsc{True},
whereas an inactive cell represents \textsc{False}. More precisely,
observing the cell in the same phase as its reading time gives
\begin{equation}
\label{eq:temporal-code}
x_v(t)=
\begin{cases}
-1, & t<\tau(v),\\
2b-1, & t=\tau(v),\\
+1, & t>\tau(v),
\end{cases}
\qquad t\geq0,\quad t\equiv\tau(v)\pmod2.
\end{equation}
In particular, whenever the displayed times are nonnegative,
\[
\begin{array}{c|ccc}
 & \tau(v)-2 & \tau(v) & \tau(v)+2\\ \hline
\textsc{True}  & -1 & +1 & +1\\
\textsc{False} & -1 & -1 & +1
\end{array}
\]
A true signal is active at the scheduled reading time; a false signal
becomes active at the next observation in the same phase, two updates
later. The comparison is made two steps apart because the activated
wire oscillates: a positive entry does not mean that the cell remains
positive at the intervening time. In particular, eventual activation
alone does not distinguish the Boolean values.

A straight wire consists of a one-cell-wide channel between two
checkerboard support strips, each two cells wide, with small solid
caps closing the ends of the displayed instance.
Figure~\ref{fig:wire-idle} shows its source-free background at $t=0$
and $t=1$. All cells, including those in the support, obey the same
synchronous majority rule; none are held fixed externally. Dark and
cyan cells distinguish background and signal roles, not different
update rules.
\input{gadgets/wire-idle.tex}

Along an oriented channel $v_0,\ldots,v_L$, the reading schedule
advances by one update per cell:
\begin{equation}
\label{eq:wire-reading-times}
    \tau(v_i)=\tau(v_0)+i,\qquad 0\leq i\leq L.
\end{equation}
Each cell therefore reproduces the same Boolean value at its own
reading time. The ports in Figure~\ref{fig:wire-dynamics} are eight
cells apart. The two rows show $A=1$ and $A=0$: at $T=8$, the output
is positive exactly when $A=1$, while at $T+2=10$ it is positive in
both cases.The wire implements $A\mapsto A$ without changing the
temporal code.

The finite source pattern shown in
Figure~\ref{fig:wire-dynamics} initiates the signal. The first channel cell is read at time $8$.

\input{gadgets/wire-dynamics.tex}

\subsection{Temporal crossover}
\label{subsec:crossover}

The crossover lets two signals pass through a common region while
preserving their respective values. Its horizontal output reproduces
the horizontal input $A$, and its vertical output reproduces the
vertical input $B$; thus it implements
\[
    (A,B)\longmapsto(A',B')=(A,B).
\]
Figure~\ref{fig:crossover-idle} shows the two phases of the
source-free background. The input routes meet at the central region;
the Boolean values must survive this interaction in their temporal
encoding, rather than in spatially disjoint channels.
\input{gadgets/crossover-idle.tex}

In the displayed instance, the left source initially leads by five
cells. The right arm has length thirteen, whereas the other arms
have length eight. This extension aligns the two output reading
times. Figure~\ref{fig:crossover-dynamics} shows the assignments
$(A,B)=(0,0),(1,0),(0,1),(1,1)$, in that order. Both outputs are read
at $T=12$, when they reproduce their corresponding inputs, including
in the two cases in which the input values differ. At $T+2=14$, both
output ports are positive in every case. The information is therefore
preserved at the scheduled reading time, not as a permanent
difference between the eventual states.
\input{gadgets/crossover-dynamics.tex}

\subsection{Right-angle turn}
\label{subsec:turn}

A right-angle turn changes the direction of propagation by $90^\circ$
while preserving the Boolean value. Figure~\ref{fig:turn-idle} shows
an enlarged instance in which each arm extends six cells from the
corner. The two frames display the source-free background at $t=0$
and $t=1$.
\input{gadgets/turn-idle.tex}

Figure~\ref{fig:turn-dynamics} shows both input values following the
channel around the corner. For $A=1$, the front reaches the corner
at $t=4$ and the output at its reading time $T=8$. For $A=0$, the
output is still negative at $T$ and becomes positive at $T+2=10$.
The turn thus implements $A\mapsto A$ with the same temporal code
as a straight wire. Rotated and reflected copies provide the
orientations needed to route signals inside the square blocks.
\input{gadgets/turn-dynamics.tex}

\subsection{AND gate}
\label{subsec:and}

The elementary AND gadget implements
\[
    (A,B)\longmapsto(A,A\land B).
\]
It preserves the first input at its horizontal output and computes
the conjunction at its vertical output.
Figure~\ref{fig:and-idle} shows the source-free background. In the
input configurations, the left source initially leads by three cells.
The right output arm has length ten, whereas the other arms have
length seven; this extension synchronizes the two outputs.
\input{gadgets/and-idle.tex}

Figure~\ref{fig:and-dynamics} shows the four assignments in the same
order as for the crossover. Both outputs are read at $T=10$. The
conjunction port is positive at this time only for $(A,B)=(1,1)$,
while the horizontal output reproduces $A$. At $T+2=12$, both ports
are positive for every assignment. The conjunction is therefore
encoded in the timing of the output signal, with a false result
represented by a late signal rather than an output that remains
inactive forever.
\input{gadgets/and-dynamics.tex}

\subsection{OR gate}
\label{subsec:or}

The elementary OR gadget implements
\[
    (A,B)\longmapsto A\lor B.
\]
Two horizontal input branches feed a vertical output branch. The
three channel arms have length seven, and the lower supporting
pattern is part of the autonomous background shown in
Figure~\ref{fig:or-idle}; it is not an additional variable input.
\input{gadgets/or-idle.tex}

The left source initially leads by two cells.
Figure~\ref{fig:or-dynamics} shows all four input assignments, with
the output read at $T=11$. The output is positive exactly when at
least one input is true. Only $(A,B)=(0,0)$ leaves it negative at
that time; its late activity is visible at $T+2=13$. As for AND,
the output uses the same temporal distinction as the input signals.
\input{gadgets/or-dynamics.tex}

\subsection{Signal duplicator}
\label{subsec:copy}

A circuit value may be needed at several subsequent gates. The
signal duplicator implements
\[
    A\longmapsto(A',A'')=(A,A),
\]
with one output continuing horizontally and another leaving
vertically. Figure~\ref{fig:copy-idle} shows the source-free
background. The solid lower support is part of this background,
not a second input signal.
\input{gadgets/copy-idle.tex}

The right output arm is extended by three cells so that both outputs
have the same reading time. Figure~\ref{fig:copy-dynamics} shows that
$A'$ and $A''$ both reproduce $A$ at $T=13$. At $T+2=15$, both ports
are positive for either input value. Repeated duplication makes a
signal available to several destinations; in particular, it lets us
supply two copies of the result of an AND or OR gate in the uniform
blocks below.
\input{gadgets/copy-dynamics.tex}

\subsection{Uniform square blocks}
\label{subsec:uniform-blocks}

The elementary pieces explain how signals propagate and interact.
We now package them into square blocks with a common spatial and
temporal interface. We follow the square-block presentation
of~\cite{goles2018signed} and the north/west-input,
south/east-output convention of~\cite{goles2021asynchronous}.
In the latter convention, both AND and OR deliver their result
on both output sides. This normalization concerns assemblies of
the elementary pieces, not a change to their individual functions.

\paragraph{Ports and logical interfaces.}
Fix the same odd side length $k=2\mu+1$ for every block, where
$\mu=(k-1)/2$ is the port offset. The symbol $m$ is reserved for the
number of circuit gates. Using
row/column coordinates, with rows increasing southward and columns
eastward, place the ports at
\[
    N=(0,\mu),\qquad W=(\mu,0),\qquad
    S=(k-1,\mu),\qquad E=(\mu,k-1).
\]
Inputs enter through $N$ and $W$, and outputs leave through $S$ and
$E$. Writing $a$ for the north input and $b$ for the west input, the
four block interfaces are
\[
\begin{array}{c|c}
\text{Block} & \text{Output values }(E,S)\\ \hline
\mathrm{AND}   & (a\land b,a\land b)\\
\mathrm{OR}    & (a\lor b,a\lor b)\\
\mathrm{CROSS} & (b,a)\\
\mathrm{DUP}_{\mathrm W} & (b,b)
\end{array}
\]
The west-input duplicator has its north side closed. Reflecting
across the north-west/south-east diagonal gives a north-input
variant $\mathrm{DUP}_{\mathrm N}$ with its west side closed and
outputs $(a,a)$. A closed
side is not a logical zero input: a zero, when required, must be
supplied as a late signal.

The figures pair the logical interface with the physical initial
configuration. All four realizations use
\[
    k=153,\qquad \Delta=298,
\]
where $\Delta$ is the common input-to-output delay, measured between
reading times. These are fixed constants, independent of the
simulated circuit, and are not claimed to be minimal. The left
panels show functional connections; the right panels show the
actual arrangement of cells inside the square.

\paragraph{AND and OR blocks.}
Figure~\ref{fig:block-and} connects the conjunction output of the
elementary AND to a duplicator, which sends the result to both $E$
and $S$. The elementary gate's additional passthrough output is
terminated inside the block. The distinction between the elementary
function $(A,B)\mapsto(A,A\land B)$ and the normalized function
$(a,b)\mapsto(a\land b,a\land b)$ is thus implemented by assembly.
\input{gadgets/block-and.tex}

Figure~\ref{fig:block-or} applies the same principle to OR. An
elementary OR computes $a\lor b$, and a duplicator supplies the two
output copies. Wires and turns connect these pieces to the prescribed
boundary ports.
\input{gadgets/block-or.tex}

\paragraph{Crossover and duplication blocks.}
Figure~\ref{fig:block-crossover} connects the north input to the
south output and the west input to the east output. Its central
interaction is the temporal crossover described above, so that the
two input values remain independent. Figure~\ref{fig:block-copy}
uses the west input and delivers two copies of its value, while
leaving the north side closed.
\input{gadgets/block-crossover.tex}
\input{gadgets/block-copy.tex}

\paragraph{Internal synchronization.}
\label{subsec:phase-converter}
\label{subsec:delay}
The compact elementary figures have different arm lengths and
reading times. Inside a uniform block, the routes are chosen so
that both outputs satisfy the temporal code at the common reading
time $\tau+\Delta$, when the inputs are read at $\tau$. In particular,
an output carrying zero is negative at $\tau+\Delta$ and positive at
$\tau+\Delta+2$, in the same physical phase.

Straight wires contribute one update of delay per channel cell, as
in~\eqref{eq:wire-reading-times}. Route-length adjustments align
the arrival times, and the reading phases must also match. A branch
of the duplicator that changes the reading phase can be used with
the other branch terminated, supplying an internal phase adjustment.
These adjustments and the matching of supporting patterns are
handled inside the blocks, rather than as additional elementary
logical operations.

\paragraph{Connecting blocks.}
At the circuit level, the uniform $k\times k$ blocks are arranged in a
rectangular grid. The east output of each block is adjacent to the west
input of its right neighbor, and its south output is adjacent to the
north input of the block below. Since the squares are disjoint, the two
port cells are adjacent rather than identified, so transferring a signal
from one block to the next takes one update.

All active ports use the same channel interface. Hence matching ports of
adjacent blocks form a continuation of the wire pattern. When a block
ignores one of its logical inputs, the corresponding incoming channel
ends at the fixed termination included in that block pattern. Exposed
outputs on the boundary of the complete construction are terminated in
the same way.

Blocks on the antidiagonal $i+j=\ell$ belong to the same circuit level.
Let $\tau^{\mathrm{in}}_\ell$ and $\tau^{\mathrm{out}}_\ell$ denote,
respectively, the common input and output reading times at level $\ell$.
Since every block has delay $\Delta$ and every inter-block connection
takes one update,
\[
    \tau^{\mathrm{out}}_\ell
      = \tau^{\mathrm{in}}_\ell+\Delta,
    \qquad
    \tau^{\mathrm{in}}_{\ell+1}
      = \tau^{\mathrm{out}}_\ell+1.
\]
Consequently, $\tau^{\mathrm{in}}_{\ell+1}
      = \tau^{\mathrm{in}}_\ell+(\Delta+1).$ 
The supporting checkerboard patterns are synchronized with this schedule.
Let $B_L^{(0)}$ and $B_L^{(1)}$ denote the two source-free phases of a
block with label $L$, expressed in local coordinates. A block in grid position $(i,j)$ uses the compatible phase
$B_L^{((i+j)\bmod 2)}.$ 

Indeed, translating a block by $(ik,jk)$ changes checkerboard parity by
$i+j$, because $k$ is odd. Since $\Delta+1$ is also odd, moving from one
grid level to the next changes spatial and temporal parity simultaneously.
The boundary blocks follow the same phase convention.

\section{Proof of the main result}
\label{sec:main-proof}

A \emph{logspace transducer} is a deterministic Turing machine with a
read-only input tape, a write-only output tape whose head only moves to
the right, and a work tape of which it uses $O(\log |w|)$ cells on input
$w$. A function computed by such a machine is \emph{logspace
computable}. A problem $A$ \emph{reduces} to $B$, written $A\le_L B$,
if there is a logspace-computable function $f$ such that
$w\in A$ if and only if $f(w)\in B$. A problem is
\emph{$\PP$-complete} if it belongs to $\PP$ and every problem in
$\PP$ reduces to it.

We use the following standard facts; see~\cite{greenlaw1995limits}.

\begin{lemma}\label{lem:logspace}
\begin{enumerate}
\item Logspace-computable functions are closed under composition. In
      particular, if $A$ is $\PP$-complete, $B\in\PP$, and $A\le_L B$,
      then $B$ is $\PP$-complete.
\item Let $f$ be a function such that $|f(w)|$ is polynomial in $|w|$,
      and both $w\mapsto |f(w)|$ and $(w,i)\mapsto f(w)_i$, the $i$-th
      symbol of $f(w)$, are logspace computable. Then $f$ is logspace
      computable.
\end{enumerate}
\end{lemma}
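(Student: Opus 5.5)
Both parts are standard facts about logspace computation, and the plan is to prove each directly from the definition of a logspace transducer, in the order stated.

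\textbf{Part 1 (composition).} Let $f$ be computed by a transducer $M_f$ and $g$ by a transducer $M_g$, each using logarithmic work space. The composite machine simulates $M_g$ on the input $f(w)$ without ever writing $f(w)$ down.

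It keeps a binary counter $i$ for the position of $M_g$'s input head. Since $|f(w)|$ is polynomial in $|w|$, this counter has $O(\log|w|)$ bits. Whenever $M_g$ needs the symbol $f(w)_i$, the machine restarts $M_f$ from scratch on $w$ and discards its outputs until the $i$-th one appears. This needs one more counter of logarithmic size.

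The work tape of $M_g$ uses $O(\log|f(w)|)=O(\log|w|)$ cells. The total work space is therefore logarithmic.

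The ``in particular'' clause follows from this. Take any $C\in\PP$. Then $C\le_L A$ by the completeness of $A$, and $A\le_L B$ by hypothesis. Composing the two reductions gives $C\le_L B$. Together with $B\in\PP$, this makes $B$ $\PP$-complete.

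\textbf{Part 2 (bitwise computability).} Let $M_1$ compute $|f(w)|$ and $M_2$ compute $f(w)_i$ from $(w,i)$, both in logspace. The transducer for $f$ works as follows:

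\begin{enumerate}
\item It runs $M_1$ once and stores $N=|f(w)|$ in binary; this takes $O(\log|w|)$ bits.
\item For each $i=1,\dots,N$, it runs $M_2$ on $(w,i)$ and writes the resulting symbol to the output tape.
\end{enumerate}

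The output symbols are produced in order, so the output head only moves right, as the definition requires.

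\textbf{Main obstacle.} The only delicate point is that $M_2$ expects its input $(w,i)$ on a read-only tape, but $i$ is held on our work tape. This is handled the same way as in Part~1. A request for a symbol of $w$ is answered from the real input tape, and a request for a bit of $i$ is answered from the stored counter. $M_2$'s input head position is tracked by an $O(\log|w|)$-bit counter, which is enough because $|i|=O(\log|w|)$.

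Each run of $M_2$ reuses the same work space, and $M_1$'s space is freed once $N$ has been recorded. The total space is therefore $O(\log|w|)$. Nothing here is genuinely hard; the whole argument is bookkeeping of counters, the same technique used in Part~1.
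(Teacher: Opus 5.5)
Your proposal is correct. The paper gives no proof of its own and simply cites Greenlaw, Hoover and Ruzzo for these standard facts. Your arguments are the textbook proofs found there: for composition, you recompute $M_f$ on demand with logarithmic counters, and for Part~2, you loop over $i$ while simulating $M_2$'s input tape from the real input and a stored counter. The one step you assert without justification is that $|f(w)|$ is polynomial in $|w|$ for a logspace transducer $M_f$. It holds because a halting deterministic machine with only $|w|^{O(1)}$ configurations (the write-only output tape never affects the computation) runs in polynomial time, so it writes polynomially many symbols.
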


\subsection{Circuit value problems}

A \emph{Boolean circuit} is a sequence of gates $g_1,\dots,g_m$, where
each $g_i$ is either a constant in $\{0,1\}$, or $\neg g_p$, or
$g_p\wedge g_q$, or $g_p\vee g_q$ for some $p,q<i$. We write
$  \beta_i=\operatorname{val}(g_i)\in\{0,1\}$
for the inductively defined value of gate $g_i$; the output value of
the circuit is $\beta_m$. The \emph{Circuit Value Problem}
($\textsc{CVP}$) asks whether $\beta_m=1$.

A circuit is \emph{monotone} if it contains no negation gates, and
$\textsc{MCVP}$ denotes the restriction of $\textsc{CVP}$ to monotone
circuits. Input assignments are represented by constant gates. Throughout
this section, $m$ denotes the number of gates, $k$ the side length of a
physical block, and $n$ the side length of the torus.

\begin{theorem}[\cite{ladner1975circuit}]\label{thm:cvp}
$\textsc{CVP}$ is $\PP$-complete.
\end{theorem}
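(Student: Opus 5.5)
The statement is Ladner's theorem that \textsc{CVP} is $\PP$-complete. The plan has two halves: membership in $\PP$, and hardness via a logspace tableau construction that converts polynomial-time computations into circuits.

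\emph{Membership.} We evaluate the gates $g_1,\dots,g_m$ in order. Each $\beta_i$ depends only on earlier values, so a single left-to-right pass stores all $m$ bits and runs in polynomial time.

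\emph{Hardness.} Let $A\in\PP$ be decided by a deterministic one-tape Turing machine $M$ in time $p(|w|)$, where $p$ is a polynomial. On input $w$ with $N=p(|w|)$, consider the computation tableau of $N+1$ rows and $N+1$ columns. Cell $(t,j)$ records the tape symbol at position $j$ at time $t$, together with the state if the head is there, or a blank marker otherwise. We encode each cell content in binary by a fixed number $c$ of bits, with $c$ depending only on $M$.

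\begin{itemize}
\item Row $0$ is determined by $w$, so its bits become constant gates.
\item For $t\ge1$, the content of $(t,j)$ is a fixed finite function $\delta_M$ of the contents of $(t-1,j-1)$, $(t-1,j)$ and $(t-1,j+1)$. This is the locality of Turing machine steps.
\item We fix, once and for all, a constant-size $\{\wedge,\vee,\neg\}$ circuit $C_M$ computing $\delta_M$ bitwise, for example via DNF. We place one copy of $C_M$ per cell, wired to the three cells above it, with out-of-range neighbours treated as blank constants.
\item The output gate tests whether row $N$ contains an accepting state. This is a polynomial-size OR of ANDs over the cells; alternatively we modify $M$ so that it parks in an accepting configuration at a fixed cell.
\end{itemize}

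By induction on $t$, the gate values equal the true tableau, so $\beta_m=1$ if and only if $M$ accepts $w$.

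\emph{Numbering and logspace.} We number the gates topologically, by time $t$, then position $j$, then internal index within $C_M$. Each gate index is then an explicit arithmetic expression in $(t,j,r)$, and so are the indices of its predecessors, which all lie in row $t-1$ and hence have smaller indices. By Lemma~\ref{lem:logspace}(2), it suffices to compute the output length and the $i$-th output symbol in logspace. Both require only counters of size $O(\log N)$ together with the hardwired table of $C_M$, so the reduction is logspace computable.

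\emph{Main obstacle.} The only delicate point is this logspace bookkeeping: the gate numbering and the representation of the gate list must be arranged so that predecessors' indices are locally computable. Choosing the row-major numbering above makes every index an affine function of $(t,j,r)$, which resolves this.
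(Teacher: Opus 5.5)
Your proposal is correct: the paper does not prove this theorem but only cites \cite{ladner1975circuit}, and your computation-tableau reduction with an explicit, logspace-computable row-major gate numbering is essentially Ladner's original argument. The only points to pin down are routine conventions: make accepting (halted) configurations stationary so that row $N$ records acceptance, and either use a distinct end-marker constant instead of a blank for the out-of-range left neighbour or assume $M$ never moves left from column $0$, so the head is not lost there.
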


For our reduction, the monotone version suffices.

\begin{proposition}[\cite{goldschlager1977circuits}]\label{prop:mcvp}
$\textsc{CVP}\le_L\textsc{MCVP}$. Consequently, $\textsc{MCVP}$ is
$\PP$-complete, and to prove that a problem in $\PP$ is $\PP$-complete
it suffices to reduce $\textsc{MCVP}$ to it.
\end{proposition}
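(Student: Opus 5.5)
The plan is the classical Goldschlager reduction: a ``double-rail'' simulation of the given circuit by a monotone one, carried out by a logspace transducer. Given a circuit $g_1,\dots,g_m$, I will first push all negations down to the inputs by De Morgan. For every gate $g_i$ the new circuit contains two gates: $g_i^+$, intended to compute $\beta_i$, and $g_i^-$, intended to compute $\neg\beta_i$.

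The gates are defined as follows.
\begin{itemize}
\item A constant $c$ gives $g_i^+=c$ and $g_i^-=1-c$.
\item For $g_i=\neg g_p$, set $g_i^+=g_p^-$ and $g_i^-=g_p^+$. Since the circuit format only allows constants and binary gates, I realize each of these as an identity gate, e.g.\ $g_p^-\wedge g_p^-$.
\item For $g_i=g_p\wedge g_q$, set $g_i^+=g_p^+\wedge g_q^+$ and $g_i^-=g_p^-\vee g_q^-$.
\item For $g_i=g_p\vee g_q$, set $g_i^+=g_p^+\vee g_q^+$ and $g_i^-=g_p^-\wedge g_q^-$.
\end{itemize}
The gates are ordered as $g_1^+,g_1^-,g_2^+,g_2^-,\dots$, so every gate refers only to earlier ones, and $g_m^+$ is placed last as the output. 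The resulting circuit is monotone.

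Correctness is a straightforward induction on $i$: $\operatorname{val}(g_i^+)=\beta_i$ and $\operatorname{val}(g_i^-)=1-\beta_i$. Each case is immediate from the De Morgan laws and the induction hypothesis. It follows that $\operatorname{val}(g_m^+)=\beta_m$, so the original circuit accepts if and only if the monotone one does.

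For the logspace bound I will invoke Lemma~\ref{lem:logspace}(2). The output has length $O(m\log m)$. Its $j$-th symbol can be computed by first determining which new gate $g_i^\pm$ the position $j$ falls in, which is arithmetic on counters of $O(\log m)$ bits given a fixed-width encoding of gate indices. The transducer then reads the type and the operands $p,q$ of $g_i$ from the input and writes the corresponding renamed gate. This needs only a constant number of indices, each of $O(\log m)$ bits.

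The consequences then follow from Lemma~\ref{lem:logspace}(1) and Theorem~\ref{thm:cvp}: MCVP is in P, and composing reductions shows it is P-complete and suffices as a source problem. The only step needing care is the formatting: choosing a fixed-width index encoding so that symbol positions are computable by arithmetic, and handling the negation gates with identity gates. I expect no real obstacle beyond this.
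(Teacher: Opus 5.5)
Your proposal is correct. It is the standard double-rail (De Morgan) reduction of Goldschlager, which the paper invokes only by citation without giving a proof of its own. The one formatting point is making $g_m^+$ the final gate, for instance by swapping it with $g_m^-$ (or dropping $g_m^-$, which no gate references); you have already flagged this.
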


\subsection{The reduction scheme}
\label{subsec:reduction}

We first embed the monotone circuit into a purely logical grid circuit.
The construction follows the \emph{grid-embedded circuits}
of~\cite{goles2021asynchronous}, themselves inspired by Goldschlager's
layout of planar circuits~\cite{goldschlager1977circuits}. Information
flows only east and south, and every crossing is represented by a
temporal crossover.

The use of temporal crossings is important here. Monotone planar
circuits admit an $\mathsf{NC}$ evaluation
algorithm~\cite{delcher1995planar}; hence a constant-size monotone
Boolean crossover capable of planarizing arbitrary monotone circuits
would imply $\PP=\mathsf{NC}$. We use this observation only as
motivation: our construction crosses signals by separating them in
time.

The grid is organized by antidiagonals. Every east/south path from the
corner to a block $(i,j)$ traverses exactly $i+j$ blocks. Since all
physical blocks have the same delay $\Delta$, and matching ports of
adjacent blocks transfer a signal in one update, the reading time of a
block depends only on $i+j$.

\begin{figure}[t]
\centering
\includegraphics[width=\linewidth]{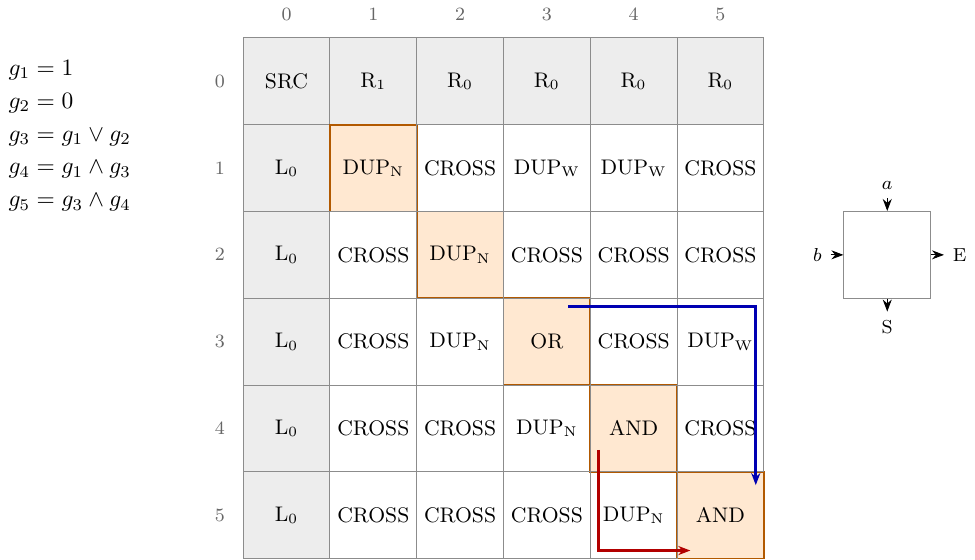}
\caption{Abstract grid embedding of the circuit on the left.
Gate $g_i$ is evaluated at $(i,i)$ and its value $\beta_i$ travels
east along row $i$ and south along column $i$. The inputs of $g_5$
turn at $\mathrm{DUP}_{\mathrm W}$ in $(3,5)$ and
$\mathrm{DUP}_{\mathrm N}$ in $(5,4)$. Other crossings are
$\mathrm{CROSS}$ blocks. The boundary clock starts at $\mathrm{SRC}$.
The reading schedule for block $(i,j)$, $i+j\ge1$, is
$T_{\mathrm{src}}+1+(i+j-1)(\Delta+1)$.}
\label{fig:grid}
\end{figure}

\begin{definition}[Grid circuit]\label{def:gridcircuit}
A \emph{grid circuit} of size parameter $m$ labels the $(m+1)^2$
positions $(i,j), 0\le i,j\le m,$
by blocks. Row indices increase southward and column indices eastward,
as in Section~\ref{subsec:uniform-blocks}.

An interior block reads a value $a\in\{0,1\}$ from the south output of
its north neighbor $(i-1,j)$ and a value $b\in\{0,1\}$ from the east
output of its west neighbor $(i,j-1)$. Its ordered output pair is
$(E,S)$. The logical functions of the blocks are listed in
Table~\ref{tab:blocks}.

\begin{table}[t]
\centering
\begin{tabular}{llccl}
\hline
Block & Name & East output & South output & Position\\
\hline
$\mathrm{SRC}$ & corner source & $1$ & $1$ & $(0,0)$\\
$\mathrm{R}_0$ & row boundary & $b$ & $0$ & row $0$, $j\ge1$\\
$\mathrm{R}_1$ & row boundary & $b$ & $1$ & row $0$, $j\ge1$\\
$\mathrm{L}_0$ & column boundary & $0$ & $a$ & column $0$, $i\ge1$\\
$\mathrm{CROSS}$ & crossover & $b$ & $a$ & interior\\
$\mathrm{DUP}_{\mathrm W}$ & west duplicator & $b$ & $b$ & interior\\
$\mathrm{DUP}_{\mathrm N}$ & north duplicator & $a$ & $a$ & interior\\
$\mathrm{AND}$ & conjunction & $a\wedge b$ & $a\wedge b$ & interior\\
$\mathrm{OR}$ & disjunction & $a\vee b$ & $a\vee b$ & interior\\
\hline
\end{tabular}
\caption{Logical grid interfaces. Inputs are ordered as $a$ from the
north and $b$ from the west, while outputs are ordered east, south.
The corner block has no inputs, $\mathrm{R}_c$ has only a west input,
and $\mathrm{L}_0$ has only a north input. A duplicator uses only one
logical input; the other incoming channel is terminated by the fixed
termination included in its physical block pattern. Boundary blocks
are used with incoming clock value $1$.}
\label{tab:blocks}
\end{table}

Row $0$ and column $0$ carry a logical clock of value $1$, emitted by
$\mathrm{SRC}$. A block $\mathrm{R}_c$ propagates this clock east and
supplies the value $c$ to the interior, whereas $\mathrm{L}_0$
propagates the clock south and supplies $0$ to the interior. The
circuit is evaluated along the antidiagonals $   i+j=0,1,\dots,2m.$

The corresponding majority patterns and their reading times are
described in Section~\ref{subsec:physical-layout}.
\end{definition}

\begin{lemma}\label{lem:grid}
There is a logspace-computable function mapping each monotone circuit
$g_1,\dots,g_m$ to a grid circuit of size parameter $m$ whose block
$(m,m)$ outputs $(\beta_m,\beta_m)$. Moreover, the label of each grid
position is logspace computable from the circuit and its coordinates.
\end{lemma}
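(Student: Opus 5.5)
We describe the labeling explicitly, following Figure~\ref{fig:grid}. Gate $g_i$ is assigned to the diagonal position $(i,i)$ for $1\le i\le m$. The invariant to establish is that the value $\beta_i$ is carried east along row $i$ from column $i$ onward, and south along column $i$ from row $i$ onward. Row $0$ and column $0$ are reserved for the clock and for constants.

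Concretely, the labels are assigned as follows.
\begin{itemize}
\item Position $(0,0)$ is $\mathrm{SRC}$.
\item Position $(0,j)$ with $j\ge1$ is $\mathrm{R}_1$ if $g_j$ is the constant $1$, and $\mathrm{R}_0$ otherwise.
\item Position $(i,0)$ with $i\ge1$ is $\mathrm{L}_0$.
\item A diagonal position $(i,i)$ with $g_i=g_p\wedge g_q$ (resp.\ $\vee$) and $p<q<i$ is $\mathrm{AND}$ (resp.\ $\mathrm{OR}$). If $g_i$ is a constant, we use $\mathrm{DUP}_{\mathrm N}$, which forwards the value $c$ arriving from column $i$, and is thus correct by the invariant for column $i$ at row $i-1$.
\end{itemize}

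To bring $\beta_p$ and $\beta_q$ into $(i,i)$, we route one along the row and one along the column. The value $\beta_q$ travels south along column $q$. At position $(i,q)$ we place $\mathrm{DUP}_{\mathrm N}$, which sends $\beta_q$ east along row $i$, so that it arrives at $(i,i)$ from the west. The value $\beta_p$ travels east along row $p$. At position $(p,i)$ we place $\mathrm{DUP}_{\mathrm W}$, which sends $\beta_p$ south along column $i$, so that it arrives at $(i,i)$ from the north. Since $p<i$ and $q<i$, these positions lie strictly above and strictly left of the diagonal, respectively. All other positions are labeled $\mathrm{CROSS}$. The same holds for positions below the diagonal in row $i$ with $j>i$, and for positions in column $j$ with $i>j$.

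A problem arises when $p=q$, that is, when $g_i=g_p\wedge g_p$. We first preprocess the circuit in logspace: every such gate is replaced by a copy of $g_p$, for instance as $g_p\vee g_p$ routed differently. Alternatively, and more simply, we note that both routes remain valid, because the row route uses row $p$ and column $i$, while the column route uses column $q$ and row $i$.

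Next we need to check that the duplicators do not destroy other streams. Row $i$, between columns $q$ and $i$, now carries $\beta_q$ instead of zero, and it crosses columns via $\mathrm{CROSS}$. Before column $q$, row $i$ carries the $0$ supplied by $\mathrm{L}_0$. Since $\mathrm{DUP}_{\mathrm N}$ ignores its west input, that stream simply terminates. Symmetrically, column $i$ carries the constant from row $0$ down to row $p$, and from row $p$ onward it carries $\beta_p$. Past the diagonal, the $\mathrm{AND}$ or $\mathrm{OR}$ block at $(i,i)$ emits $\beta_i$ both east and south, which establishes the invariant.

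The correctness proof is an induction on $i$, or alternatively on the antidiagonal $i+j$. The claim is that the output pair of every block equals the values prescribed by the invariant. In particular, block $(m,m)$ outputs $(\beta_m,\beta_m)$.

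For logspace computability, each label depends only on $(i,j)$ and on the gate descriptions of $g_i$ and $g_j$. Specifically, for $i<j$ we check whether $g_j$ has $i$ as its first operand. Ordering the operands as $p\le q$ is a comparison of indices. By Lemma~\ref{lem:logspace}(2), the whole output is logspace computable. The main obstacle is the case analysis ensuring that each off-diagonal position receives exactly one role. This holds because each gate $g_j$ claims at most one position in column $j$, namely $(p,j)$, and at most one in row $j$, namely $(j,q)$.
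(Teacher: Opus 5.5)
Your proposal is correct and follows essentially the same route as the paper. It uses the identical labeling ($\mathrm{SRC}$, $\mathrm{R}_c$ and $\mathrm{L}_0$ on the boundary, the gate block at $(i,i)$, $\mathrm{DUP}_{\mathrm W}$ at $(p_i,i)$, $\mathrm{DUP}_{\mathrm N}$ at $(i,q_i)$, $\mathrm{CROSS}$ elsewhere), the same row/column invariant proved by induction on $i$, and the same locality argument for logspace computability. The case $p=q$ needs no preprocessing: as your second alternative observes and the paper states, $(p,i)$ and $(i,p)$ are distinct positions on opposite sides of the diagonal, so the general argument already covers it and the ``replace by $g_p\vee g_p$'' detour can be dropped.
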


\begin{proof}
Write every nonconstant gate as $ g_i=g_{p_i}\circ_i g_{q_i}, $
where $1\le p_i\le q_i<i$ and
$\circ_i\in\{\wedge,\vee\}$. Label the grid as follows:
\begin{itemize}
\item $(0,0)$ is $\mathrm{SRC}$. For $j\ge1$, $(0,j)$ is
      $\mathrm{R}_1$ when $g_j$ is the constant $1$, and
      $\mathrm{R}_0$ otherwise. For $i\ge1$, $(i,0)$ is
      $\mathrm{L}_0$.
\item $(i,i)$ is $\mathrm{AND}$ or $\mathrm{OR}$ according to
      $\circ_i$, and is $\mathrm{DUP}_{\mathrm N}$ when $g_i$ is
      constant.
\item For $1\le i<j$, $(i,j)$ is $\mathrm{DUP}_{\mathrm W}$ when
      $g_j$ is nonconstant and $i=p_j$, and is $\mathrm{CROSS}$
      otherwise.
\item For $1\le j<i$, $(i,j)$ is $\mathrm{DUP}_{\mathrm N}$ when
      $g_i$ is nonconstant and $j=q_i$, and is $\mathrm{CROSS}$
      otherwise.
\end{itemize}
The label of $(i,j)$ depends only on the coordinates and on the
descriptions of $g_i$ and $g_j$, and is therefore computable in
logarithmic space.

We prove by induction on $i$ that the east outputs of $(i,j)$ for
$j\ge i$, and the south outputs of $(r,i)$ for $r\ge i$, all equal
$\beta_i$.

When $g_i$ is constant, the boundary block $(0,i)$ supplies
$\beta_i$ down column $i$. All blocks above $(i,i)$ are crossovers, so
this value reaches the north duplicator at $(i,i)$, which produces two
copies of $\beta_i$.

Now let $g_i=g_{p_i}\circ_i g_{q_i}$. Above the diagonal, column $i$
consists of crossovers except at $(p_i,i)$, where a west duplicator
copies $\beta_{p_i}$ south. Similarly, to the left of the diagonal,
row $i$ consists of crossovers except at $(i,q_i)$, where a north
duplicator copies $\beta_{q_i}$ east. By the induction hypothesis,
these two values are available from the corresponding earlier gates.
The diagonal block therefore outputs $ \beta_{p_i}\circ_i\beta_{q_i}=\beta_i$
in both directions. This argument also covers the case $p_i=q_i$.

East of $(i,i)$, both crossovers and west duplicators preserve the
value arriving from the west. South of $(i,i)$, both crossovers and
north duplicators preserve the value arriving from the north. Hence
$\beta_i$ propagates along the remainder of row $i$ and column $i$,
completing the induction.
\end{proof}

\subsection{Grid circuit embedding}
\label{subsec:physical-layout}

We now replace every logical block by the corresponding uniform
$k\times k$ majority pattern from Section~\ref{sec:constructions}.
All active ports use the same channel interface. When two neighboring
blocks are placed in adjacent squares, matching ports form exactly a
continuation of the wire pattern: the channel continues across the
common side, and the support strips meet with the same checkerboard
phase. Consequently, a value read at an output port at time $t$ is
presented to the corresponding input port of the neighboring block at
time $t+1$.

The inward-facing ports of the boundary blocks use this same interface.
At an unused logical input of a duplicator, the incoming channel meets
the fixed termination included in the displayed block pattern. Thus
the complete grid is obtained by directly juxtaposing the finite block
patterns described in Section~\ref{sec:constructions}.

\paragraph{Boundary signals.}
The boundary interfaces of Definition~\ref{def:gridcircuit} are
realized using the duplicator and the temporal encoding
in~\eqref{eq:temporal-code}. Along the top boundary,
$\mathrm{R}_1$ is a west-input duplicator: one copy of the incoming
clock signal continues east, while the other is sent south into the
circuit.

The block $\mathrm{R}_0$ is obtained from $\mathrm{R}_1$ by modifying
only its south-going route: inside the block, this route makes a
two-cell detour before reaching the south port, increasing its channel
length by exactly two cells while leaving the east-going route
unchanged. Hence the south-going signal reaches the same port two
updates later than in $\mathrm{R}_1$, which is precisely the encoding
of Boolean value $0$ in~\eqref{eq:temporal-code}. The block
$\mathrm{L}_0$ is obtained by reflecting $\mathrm{R}_0$ across the
main diagonal.

At the upper-left corner, $\mathrm{SRC}$ is an
$\mathrm{R}_1$ block fed by the source described in
Section~\ref{subsec:wires}. Its input is read at time $8$, so its
two outputs have nominal reading time $ T_{\mathrm{src}}=8+\Delta=306.$
For the uniform blocks used here, $\Delta=298$, and hence $T_{\mathrm{src}}=306.$
Let
\[
    c_j=
    \begin{cases}
        1, & \text{if $g_j$ is the constant-$1$ gate},\\
        0, & \text{otherwise}.
    \end{cases}
\]
Each step along the boundary consists of one update across the common
port followed by the common block delay $\Delta$. Therefore
\[
    (0,j):\qquad
    (E,S)=(1,c_j)
    \quad\text{at time}\quad
    T_{\mathrm{src}}+j(\Delta+1),
\]
and
\[
    (i,0):\qquad
    (E,S)=(0,1)
    \quad\text{at time}\quad
    T_{\mathrm{src}}+i(\Delta+1).
\]
Here and below, an output value $0$ means that its signal is delayed by
two updates relative to the nominal reading time, according
to~\eqref{eq:temporal-code}.

\begin{lemma}[Composition along the grid]
\label{lem:block-composition}
For every grid position $(i,j)$, the Boolean outputs of its physical
block have nominal reading time
\[
    T_{\mathrm{src}}+(i+j)(\Delta+1).
\]
For every block with $i+j\ge1$, its available inputs have reading time
\[
    T_{\mathrm{src}}+1+(i+j-1)(\Delta+1).
\]
\end{lemma}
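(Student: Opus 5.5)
The plan is an induction on the antidiagonal index $\ell=i+j$, since by the discussion in Section~\ref{subsec:uniform-blocks} every block on the antidiagonal $i+j=\ell$ has the same interface timing. The claim to prove at level $\ell$ is $P(\ell)$: every block $(i,j)$ with $i+j=\ell$ has outputs with nominal reading time $T_{\mathrm{src}}+\ell(\Delta+1)$. For $\ell\ge1$ I add that its inputs have reading time $T_{\mathrm{src}}+1+(\ell-1)(\Delta+1)$, and that the temporal code~\eqref{eq:temporal-code} holds at those times.

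For the base case $\ell=0$, the only block is $\mathrm{SRC}$ at $(0,0)$. It is an $\mathrm{R}_1$ block whose input, fed by the source of Section~\ref{subsec:wires}, is read at time $8$. The block delay $\Delta$ then gives outputs at $8+\Delta=T_{\mathrm{src}}$. The boundary computation displayed just before the lemma already checks the edges: $(0,j)$ and $(i,0)$ have outputs at $T_{\mathrm{src}}+j(\Delta+1)$ and $T_{\mathrm{src}}+i(\Delta+1)$. I would take those as the boundary instances of the induction, or rederive them by the same inductive step.

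For the inductive step, take $(i,j)$ with $i+j=\ell\ge1$. Each available input comes from the south output of $(i-1,j)$ or the east output of $(i,j-1)$, and both neighbors lie on antidiagonal $\ell-1$. By $P(\ell-1)$ their outputs are read at $T_{\mathrm{src}}+(\ell-1)(\Delta+1)$. The juxtaposition property of Section~\ref{subsec:physical-layout} says matching ports form a continuation of the wire pattern, so a value read at an output port at time $t$ is presented at the adjacent input port at time $t+1$, with its temporal code preserved by~\eqref{eq:wire-reading-times}. Hence both inputs are read at $T_{\mathrm{src}}+1+(\ell-1)(\Delta+1)$, and in particular they are synchronized. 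The internal synchronization property of the uniform blocks then says that inputs read at a common $\tau$ give outputs satisfying the temporal code at $\tau+\Delta$. This yields $T_{\mathrm{src}}+1+(\ell-1)(\Delta+1)+\Delta=T_{\mathrm{src}}+\ell(\Delta+1)$. Closed sides and terminated inputs impose no timing constraint, so duplicators and boundary blocks with a single input are covered by the same argument.

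The main obstacle is justifying that blocks behave as specified when embedded, not in isolation. Two things must be checked. First, the phase convention $B_L^{((i+j)\bmod 2)}$ must make neighboring supports and channels match in both spatial and temporal parity at the moment the signal arrives. Since $k$ is odd and $\Delta+1$ is odd, a shift by one level flips both parities together, so the local view at each port coincides with the view in the isolated gadget. Second, the background must be source-free: before the input front arrives, each block's interior evolves exactly as its period-two background, so no spurious signal is emitted early. I would handle this by a locality (light-cone) argument. Majority has radius $1$, so the state of a block up to its reading time is determined by its own pattern together with the signals entering through its ports, and the adjacent periodic backgrounds interact consistently because the patterns are glued along matching checkerboard strips. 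With these two facts the induction closes.
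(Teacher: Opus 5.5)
Your proposal is correct and follows essentially the same route as the paper's proof. Both induct on the antidiagonal level $\ell=i+j$, take $\mathrm{SRC}$ read at $8+\Delta=T_{\mathrm{src}}$ as the base case, use the one-update transfer across matching ports to get input time $T_{\mathrm{src}}+1+(\ell-1)(\Delta+1)$, and add the common block delay $\Delta$ to get $T_{\mathrm{src}}+\ell(\Delta+1)$. Your extra discussion of phase matching and source-free backgrounds goes beyond what the paper's proof does, since the paper takes the embedded behavior of blocks as given by the construction in Sections~\ref{subsec:uniform-blocks} and~\ref{subsec:physical-layout}.
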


\begin{proof}
We induct on the level $\ell=i+j$.

At level $0$, the two outputs of $\mathrm{SRC}$ have reading time
$T_{\mathrm{src}}$ by construction.

Consider a block at level $\ell\ge1$. Every predecessor of this block
lies at level $\ell-1$, and therefore, by the induction hypothesis, its
relevant output is read at $ T_{\mathrm{src}}+(\ell-1)(\Delta+1).$
Matching ports transfer the value across the common block boundary in
one update. Hence every available input of the block is read at
$T_{\mathrm{src}}+1+(\ell-1)(\Delta+1).$

For an interior block the north and west inputs arrive simultaneously;
for a boundary block this formula applies to its unique incoming clock
signal.

Every uniform block has delay $\Delta$, so its outputs are read at
\begin{align*}
    T_{\mathrm{src}}
    +1+(\ell-1)(\Delta+1)+\Delta
    &=T_{\mathrm{src}}+\ell(\Delta+1).
\end{align*}
This proves both formulas.
\end{proof}

We now instantiate the construction with $  k=153$ and $\Delta=298, $
and therefore $ T_{\mathrm{src}}=306. $

Place block $(i,j)$ at translation $(ik,jk)$ in row/column
coordinates. Writing $\ell=i+j\ge1$, Lemma~\ref{lem:block-composition}
gives
\begin{align}
\tau_{\mathrm{in}}(i,j)
  &=T_{\mathrm{src}}+1+(\ell-1)(\Delta+1),
    \label{eq:grid-input-time}\\
\tau_{\mathrm{out}}(i,j)
  &=\tau_{\mathrm{in}}(i,j)+\Delta
   =T_{\mathrm{src}}+\ell(\Delta+1).
    \label{eq:grid-output-time}
\end{align}
Thus consecutive antidiagonals are separated by $\Delta+1=299$
updates.

Terminate the east-facing channels of the last column and the
south-facing channels of the last row with the fixed wire endings of
Section~\ref{sec:constructions}. Let $r_{\mathrm{ext}}$ be the maximum,
over the finite source and termination patterns used here, of the
distance by which such a pattern extends beyond the tiled
$(m+1)k\times(m+1)k$ square. Set $    c_{\mathrm{pad}}=2r_{\mathrm{ext}}+4$
and embed the complete assembly in the even-sided torus of side
\[
    n=
    2\left\lceil
        \frac{(m+1)k+c_{\mathrm{pad}}}{2}
      \right\rceil.
\]
All cells not belonging to the construction are initialized in
physical state $-1$.

Finally, translate the complete initial configuration so that the east
output port of block $(m,m)$ is at the origin. The translation is
applied to every assigned spin, including the checkerboard support
pattern; the phase is therefore preserved.

Since $(m,m)$ lies on level $2m$, the query time is
\begin{equation}
\label{eq:query-time}
    T=T_{\mathrm{src}}+2m(\Delta+1).
\end{equation}

\begin{lemma}[Uniform construction]\label{lem:layout}
The map that sends a monotone circuit $g_1,\dots,g_m$ to the torus side
$n$, the initial configuration $x$, and the query time $T$ defined
above is logspace computable. Moreover,
\[
    n=O(m)
    \qquad\text{and}\qquad
    T=O(m).
\]
\end{lemma}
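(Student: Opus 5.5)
The plan is to verify the two size bounds directly from the definitions, and then establish logspace computability through the second part of Lemma~\ref{lem:logspace}. That lemma requires three things: the output length is polynomial, the output length is logspace computable, and each output symbol is logspace computable from $(w,i)$.

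\emph{Size bounds.} These come first because they are immediate. The constants $k=153$, $\Delta=298$, $T_{\mathrm{src}}=306$ and $r_{\mathrm{ext}}$ are fixed, independent of the circuit. Hence $c_{\mathrm{pad}}$ is a constant, and
\[
n=2\lceil((m+1)k+c_{\mathrm{pad}})/2\rceil\le (m+1)k+c_{\mathrm{pad}}+1=O(m).
\]
Likewise, \eqref{eq:query-time} gives $T=306+598m=O(m)$. The output consists of $n$ and $T$ in binary, $v$ (the origin), and the $n^2=O(m^2)$ spins of $x$. Its length is therefore polynomial in the input length $|w|\ge m$.

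\emph{Logspace computability of the length and of $n$, $T$.} Counting the gates $m$ requires one pass with an $O(\log|w|)$ counter. Then $n$ and $T$ are obtained from $m$ by a constant number of additions and multiplications by constants on $O(\log m)$-bit numbers, which is doable in logspace. The total output length is a fixed arithmetic expression in $n$ and in the bit lengths of $n$ and $T$, so it is also logspace computable.

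\emph{Computing an individual spin.} This is the step needing care. Given an index into the encoding of $x$, the machine proceeds as follows.
\begin{enumerate}
\item Recover the cell $(r,c)\in V_n$ by division, and undo the final translation. This is adding a fixed offset modulo $n$, where the offset is determined by the position of the east port of $(m,m)$, namely $(mk+\mu,\,mk+k-1)$ plus the padding shift.
\item Decide which region the untranslated cell lies in: inside the tiled square, inside one of the finitely many source or termination patterns of radius at most $r_{\mathrm{ext}}$, or in the background.
\item If the cell lies in the tiled square, compute $i=\lfloor r/k\rfloor$, $j=\lfloor c/k\rfloor$ and the local coordinates $(r\bmod k,\,c\bmod k)$. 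Obtain the label of $(i,j)$ by Lemma~\ref{lem:grid}, which needs only $O(\log m)$ bits to scan for the descriptions of $g_i$ and $g_j$.
\item Output the entry of the fixed table $B_L^{((i+j)\bmod 2)}$ at those local coordinates. This table is a finite constant hard-wired into the transducer.
\item Treat source and termination cells by analogous lookups in fixed finite tables, determined by which boundary block they are attached to. Every other cell outputs $-1$.
\end{enumerate}
All intermediate quantities are integers bounded by $O(n)$, so $O(\log m)$ bits suffice.

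The main obstacle is making step~2 rigorous: the classification of cells near the boundary of the tiled square must be unambiguous. The choice $c_{\mathrm{pad}}=2r_{\mathrm{ext}}+4$ ensures that the extension patterns do not wrap around the torus and overlap each other or the opposite side. It also ensures that the padding keeps even parity, so the checkerboard phase of the translated configuration is as intended. I would check that every extension pattern sits within distance $r_{\mathrm{ext}}$ of a boundary port of a known block. Its position is then a constant-offset function of that block's coordinates, and membership tests reduce to comparisons of $O(\log m)$-bit numbers. With this verified, Lemma~\ref{lem:logspace} yields logspace computability of the whole map.
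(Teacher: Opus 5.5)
Your proposal is correct and follows essentially the same route as the paper: treat the block library and all offsets as constants, compute $n$, $T$ and the translation with $O(\log m)$-bit arithmetic, and compute each spin by undoing the translation, testing the source and termination patterns, dividing by $k$, obtaining the label from Lemma~\ref{lem:grid} and the phase from $(i+j)\bmod 2$, and assigning $-1$ elsewhere, before concluding with Lemma~\ref{lem:logspace}(2). One small correction, which does not affect the proof: the evenness of $n$ comes from the rounding $2\lceil\cdot/2\rceil$, not from $c_{\mathrm{pad}}$ (which is even but does not fix the parity of $(m+1)k+c_{\mathrm{pad}}$), and the phase is preserved simply because the translation is applied to the whole configuration.
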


\begin{proof}
The block library, source pattern, wire endings, and all geometric
offsets are fixed constants. Hence $n$, $T$, and the final translation
are computable using $O(\log m)$-bit arithmetic.

To compute the spin assigned to a given torus cell, first undo the
final translation modulo $n$. Determine whether the resulting cell
belongs to one of the fixed source or boundary-termination patterns.
Otherwise, when it lies inside the tiled square, division by the
constant $k$ yields its block coordinates $(i,j)$ and its local
coordinates inside that block. Lemma~\ref{lem:grid} gives the block
label in logarithmic space, while $i+j$ and the fixed global phase
determine its phase. The corresponding fixed block pattern then
determines the spin in $\{-1,+1\}$. Every remaining torus cell receives
state $-1$.

Thus both the output length and every one of the
$n^2=O(m^2)$ output spins are logspace computable. The result follows
from Lemma~\ref{lem:logspace}(2).
\end{proof}

\begin{corollary}\label{cor:layout-correct}
For the configuration $x$ and time $T$ of
Lemma~\ref{lem:layout}, $(F^T(x))_{(0,0)}=2\beta_m-1.$
In particular, the queried spin is $+1$ if and only if the input
monotone circuit evaluates to $1$.
\end{corollary}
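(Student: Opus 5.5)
The plan is to combine the logical correctness of the grid circuit (Lemma~\ref{lem:grid}) with the timing bookkeeping of Lemma~\ref{lem:block-composition}. We prove by induction on the level $\ell=i+j$ a \emph{physical} invariant. For every block $(i,j)$, each output port $v$ of the physical configuration satisfies the temporal code~\eqref{eq:temporal-code}. Its reading time is $\tau(v)=T_{\mathrm{src}}+\ell(\Delta+1)$, and its Boolean value is the one assigned to that port by Lemma~\ref{lem:grid} and Table~\ref{tab:blocks}. Once this holds at $(m,m)$, the east port has been translated to the origin. Lemma~\ref{lem:grid} says this port carries $\beta_m$, and~\eqref{eq:query-time} gives $T=\tau$. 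Hence $x_{(0,0)}(T)=2\beta_m-1$.

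The base case is $\mathrm{SRC}$, fed by the wire source of Section~\ref{subsec:wires}, whose first channel cell is read at time $8$. The block delay $\Delta$ then gives $T_{\mathrm{src}}$.

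The inductive step uses the gadget specifications of Section~\ref{sec:constructions} as black boxes. Each uniform block, whose inputs obey the temporal code at a common reading time $\tau$ and whose background is in the compatible phase $B_L^{((i+j)\bmod 2)}$, produces outputs obeying the code at $\tau+\Delta$ with the tabulated Boolean function. The boundary blocks $\mathrm{R}_0,\mathrm{L}_0$ fit the same statement because of the two-cell detour. The transfer across adjacent ports adds exactly one update, since matching ports form a continuation of the wire pattern and~\eqref{eq:wire-reading-times} applies. The arithmetic is then exactly that of Lemma~\ref{lem:block-composition}.

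The real content, and the main obstacle, is \emph{locality}: we must justify using each gadget's isolated behaviour inside the assembled torus. I would first show that the source-free background is a period-two orbit of $F$. This holds block by block, including across shared sides, where the checkerboard strips meet in matching phase, and at the junction with the $-1$ padding of width $c_{\mathrm{pad}}$. Next I would prove a light-cone argument: $F^t(x)_v$ depends only on $x$ restricted to the ball of radius $t$ around $v$. Hence a perturbation from the background inside a block can influence another block only through the port channels.

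The delicate point is that the monotone rule can leak activity sideways. For each block, I would therefore establish that during the window $[\tau_{\mathrm{in}},\tau_{\mathrm{out}}+2]$ the configuration near its boundary equals the background everywhere except on the port channels. Before $\tau_{\mathrm{in}}-2$ the input channels are also at the background. This reduces the whole configuration, restricted to a neighbourhood of the block, to the finite verified evolution of that block under the input pattern the code prescribes. Since the blocks and $\Delta$ are fixed constants, this is a finite check, carried out once per block label, phase and input assignment, as displayed in the figures.

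A final induction on time propagates this agreement to all blocks simultaneously. It uses that blocks on later antidiagonals remain in their background until their inputs arrive. Monotonicity of $F$ makes the false case, a signal two updates late, sandwiched between valid behaviours, which simplifies this step.
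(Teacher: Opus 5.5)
Your proposal is correct, and its core is exactly the paper's proof: Lemma~\ref{lem:grid} puts $\beta_m$ on the east output of block $(m,m)$, Lemma~\ref{lem:block-composition} gives that port the reading time $T=T_{\mathrm{src}}+2m(\Delta+1)$, the port is translated to the origin, and~\eqref{eq:temporal-code} then yields $(F^T(x))_{(0,0)}=2\beta_m-1$. The paper does not carry out your locality analysis (it takes the in-context behaviour of the blocks as given, on the grounds that matching ports continue the wire pattern), so that part is supplementary; if you pursue it, two points need attention: the boundary check for a block must extend well past $\tau_{\mathrm{out}}+2$, since its neighbours on the next antidiagonal compute during $[\tau_{\mathrm{out}}+1,\tau_{\mathrm{out}}+\Delta+3]$, and the invariant must fix the full space-time pattern on the port channel, because the temporal code constrains only the port cell at times of one parity and so does not determine the downstream evolution.
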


\begin{proof}
Lemma~\ref{lem:grid} gives Boolean output $\beta_m$ at block $(m,m)$.
By Lemma~\ref{lem:block-composition}, its east output has nominal
reading time $    T=T_{\mathrm{src}}+2m(\Delta+1).$
This output port is translated to the origin. The temporal encoding
in~\eqref{eq:temporal-code} therefore gives
$(F^T(x))_{(0,0)}=2\beta_m-1.$
\end{proof}

Corollary~\ref{cor:layout-correct} and Lemma~\ref{lem:layout} give a
logspace reduction $ \textsc{MCVP}\le_L\textsc{Majority Prediction}.$
Since $\textsc{MCVP}$ is $\PP$-complete by
Proposition~\ref{prop:mcvp}, and \textsc{Majority Prediction} belongs
to $\PP$, this proves Theorem~\ref{thm:majority-prediction}.

\section{Extensions}\label{sec:extensions}

We now extend our results to the \textsc{Stability} problem and to variants of the
majority rule. First, a temporal fuse turns the reduction for
\textsc{Prediction} into a reduction for \textsc{Stability}, where no
time horizon is given. Second, a change of variables transfers the
$\PP$-completeness of \textsc{Prediction} to every uniform signed
majority rule on the von Neumann neighborhood whose interactions are
symmetric. This family contains,
among others, the minority rule and the crazy-spin rule, in which each
cell votes against its own state. Throughout this section we write
$e_1=(1,0)$ and $e_2=(0,1)$.

\subsection{Stability}\label{sec:stability}

The gadgets of Section~\ref{sec:constructions} transmit values that must
be read at prescribed times. A temporal fuse instead records an early
arrival as a permanent state. This is an additional output device; the
reduction for \textsc{Prediction} already reads its output at a
specified time and does not require permanent storage.

The fuse combines an input terminal $E$, a memory cell $M$, and an
autonomous clock that provides a finite capture window.
Figure~\ref{fig:fuse-idle} shows the clock eroding without an input.
Unlike the periodic idle backgrounds of
Section~\ref{sec:constructions}, the armed configuration changes even in
the absence of a signal.
\input{gadgets/fuse-idle.tex}

Figure~\ref{fig:fuse-dynamics} compares an early arrival, a late arrival,
and no arrival. An early arrival at $E$ at time $3$ sets $M$ to physical
state $+1$ permanently from time $4$. An arrival at time $5$, or no
arrival, leaves $M$ permanently in state $-1$. Thus the fuse converts a
distinction in arrival time into a persistent distinction at one cell.

\input{gadgets/fuse-dynamics.tex}

\begin{theorem}\label{thm:stability}
\textsc{Majority Stability} is $\PP$-complete.
\end{theorem}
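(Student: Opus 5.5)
Membership in $\PP$ has already been argued: by the Goles--Olivos period-two property and the $O(n^2)$ transient bound, it suffices to simulate $O(n^2)$ steps beyond the transient and check whether the target cell ever takes the value $+1$. For hardness, I would reduce $\textsc{MCVP}$ to $\textsc{Majority Stability}$. The plan is to reuse the configuration $x$ built for Lemma~\ref{lem:layout}, whose output port is read at $T=T_{\mathrm{src}}+2m(\Delta+1)$, and attach a temporal fuse to it.

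\textbf{Construction.} Route the east output of block $(m,m)$ through a straight wire of suitable length into the input terminal $E$ of the fuse. Take the target cell $v$ to be the memory cell $M$, which is initially $-1$.

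\textbf{Timing.} By Corollary~\ref{cor:layout-correct} and the temporal code~\eqref{eq:temporal-code}, a true output activates the port at time $T$ and a false output only at time $T+2$. After the wire this becomes a signal at $E$ at time $T'$ or $T'+2$, where $T'$ is computable from $T$. The fuse of Figure~\ref{fig:fuse-dynamics} separates an arrival at local time $3$ from one at local time $5$. I would therefore start the fuse's autonomous clock so that its local time $3$ coincides with $T'$. Since the fuse is armed from $t=0$ and its clock erodes autonomously, it must be delayed by $\Theta(m)$ steps. I would do this by feeding the fuse's clock pattern from a second copy of the boundary clock, for example a delay wire from an extra $\mathrm{R}_1$-type branch on the top boundary. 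With that, only its relative timing to the data signal matters, and both signals are computed from constants and $m$.

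Two further facts are needed. First, the wire pattern must hand the signal to $E$ exactly as in the fuse figure. Second, the phase parity of the fuse background must match the global checkerboard phase; as in the block phase convention, this is fixed by choosing the placement parity. With these, $M$ becomes $+1$ permanently when $\beta_m=1$ and stays $-1$ otherwise.

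\textbf{Main obstacle.} The difficulty is ensuring that $M$ \emph{never} becomes $+1$ at any time in the case $\beta_m=0$. The Stability question quantifies over all $t$, including times after $T$ and after the whole construction has settled. Everything activates eventually, so the local fuse analysis must be made global. I would argue this in two steps:

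\begin{itemize}
\item[(i)] By locality, until the data signal arrives, the neighborhood of $M$ evolves exactly as in the isolated fuse. Before that, the only influences are the clock and the late or absent signal, and the padding of $-1$ cells separates the fuse from other gadgets by more than the relevant light cone.
\item[(ii)] After a late arrival the fuse figure shows $M$ frozen at $-1$ inside a stable $-1$ region. That region must be a fixed structure of the majority rule, such as a $2\times2$ or larger solid block that no neighborhood of positive cells can erode. Such blocks are stable under the von Neumann majority rule regardless of their outside.
\end{itemize}

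I would verify that $M$ sits in such a robust block after the capture window. In the early-arrival case I would verify symmetrically that $M$ is in a stable $+1$ block.

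Finally, the modified map remains logspace computable by the same argument as Lemma~\ref{lem:layout}, since only a constant-size pattern and an $O(m)$ wire are added. Combined with Proposition~\ref{prop:mcvp}, this completes the reduction.
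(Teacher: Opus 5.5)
\textbf{Verdict: genuine gap.} Your skeleton is the paper's: reuse the layout of Lemma~\ref{lem:layout}, lead the east output of block $(m,m)$ into the terminal $E$ of a temporal fuse, and take $M$ as the target. You also correctly see that the fuse's capture window, which closes between local times $3$ and $5$, must be shifted by $\Theta(m)$ steps. That shift is the crux of the proof, and your mechanism for it is not realized.

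``Feeding the fuse's clock pattern'' from a copy of the boundary clock means adding an activation signal. Since $F$ is monotone, extra $+1$ cells can only raise states. Such a signal can therefore at most re-open a window; it can never be the reason capture fails at $T_E+2$. Your scheme would need the arriving clock signal to create, around $T_E-3$, a $+1$ support for $M$ that then vanishes on its own by $T_E+2$. This must happen even though the wire that created it stays active forever (positive at all later times of its phase). No gadget in the paper does this, and you do not describe one. Nor can a front simply regrow a bare width-one clock path: a $-1$ cell with a single $+1$ neighbour stays $-1$.

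\textbf{The missing idea.} The delay has to be built into the initial configuration by making the clock itself long-lived. The paper extends the width-one clock path by $2\ell=T_E-3$ cells, laid straight along a free column. A vote count gives the dynamics:
\begin{itemize}
\item every cell of the extension except its free end, and also the attachment point, has three $+1$ votes and persists;
\item the free end has two $+1$ votes and turns $-1$;
\item every cell off the path has at most two $+1$ neighbours and stays $-1$.
\end{itemize}
So the extension erodes exactly one cell per step. At time $2\ell$ the fuse coincides with its armed configuration at time $0$, and it then captures an arrival at $2\ell+3=T_E$ but not at $T_E+2$.

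Because the rest of the fuse repeats two phases, $2\ell$ must be even, i.e.\ $T_E$ must be odd. The paper enforces this with the phase-adjustment piece of Section~\ref{subsec:phase-converter}; your plan does not address this parity constraint. Logspace computability survives: $\ell$ comes from $T$ by $O(\log m)$-bit arithmetic, path membership is a coordinate comparison, and $n$ grows by $O(m)$.

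\textbf{A smaller point.} In your step (i), the padding around the fuse has constant width while the relevant times are $\Theta(m)$, so a light-cone argument cannot isolate the fuse. You instead need the surrounding background to be invariant. Your observation in (ii), that a monochromatic $2\times2$ block is stable whatever surrounds it, is the right kind of tool for that.
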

\begin{proof}
Membership in $\PP$ is a consequence of a classical result \cite{goles1980periodic,goles1985decreasing} . For hardness we
reduce $\textsc{MCVP}$ to \textsc{Stability}, using the layout of
Lemma~\ref{lem:layout} with one modification: the terminator of the
east output of block $(m,m)$ is replaced by a channel of constant
length leading to the terminal $E$ of a fuse whose memory cell $M$
lies at the origin.

Let $T_E$ be the time at which a signal of value $1$ reaches $E$, so
that a signal of value $0$ reaches it at $T_E+2$. The fuse of
Figure~\ref{fig:fuse-idle} sets $M$ to $+1$ permanently when its input
arrives at time $3$, and leaves $M$ in state $-1$ forever when the input
arrives at time $5$ or never arrives. Its waiting window is bounded by
a clock, which is a path of cells of width one. We extend this path by
$2\ell=T_E-3$ cells, so that each cell of the extended part is adjacent
to exactly two cells of the path, except the new free end, and each
cell outside the path is adjacent to at most two cells of the path.

The extension must have even length, so $T_E$ must be odd. Since the
backgrounds have period two, the parity of $T_E$ is fixed by the route
from the east output of block $(m,m)$ to $E$. If $T_E$ has even parity, we insert the constant-size phase-adjustment
piece described in Section~\ref{subsec:phase-converter}. This piece
preserves the Boolean value and changes the reading phase by one
update. Hence, after this adjustment, $T_E$ is odd.

Under these conditions, every cell of the extended part other than the
free end, as well as the end of the original clock to which it is
attached, has three votes in state $+1$ and keeps its state; the free
end has only two and becomes $-1$; and every cell outside the path stays
$-1$. Hence the extension loses exactly one cell per step, while the
rest of the fuse repeats its two phases. At time $2\ell$ the extension
has disappeared, and the fuse coincides with its configuration at time
$0$ in Figure~\ref{fig:fuse-idle}. Consequently, $M$ changes its state
if and only if the input arrives at time $T_E$, that is, if and only if
$\beta_m=1$.

The path is extended straight down, along a column of the torus to the
right of the rectangle that contains no other cells in state $+1$. This
satisfies both adjacency conditions provided that $2\ell<n-c'$ for a
constant $c'$, so we enlarge $n$ to at least $2\ell+c'$. Since
$2\ell<T_E=T+O(1)$, $n$ remains polynomial in $m$. The basic fuse and the channel to $E$ have constant size, while the
added clock extension has length $2\ell=O(T)=O(m)$. $\ell$ is computed from $T$ with
$O(\log m)$-bit arithmetic, and a cell of the path is recognized by
comparing its coordinates with $\ell$. Hence, by the argument of
Lemma~\ref{lem:layout}, the reduction is logspace computable.
\end{proof}

\subsection{Symmetric signed majority}\label{sec:signed}

Signed majority cellular automata on the two-dimensional grid were
studied by Goles, Montealegre, Perrot and
Theyssier~\cite{goles2018signed}, who obtained P-complete prediction and
Turing universality for uniform asymmetric and non-uniform symmetric
rules. Here we consider the uniform symmetric case. A \emph{signed
majority rule} assigns a sign to each cell of the neighborhood and lets
every cell take the sign of the weighted sum of the spins in its closed
von Neumann neighborhood. The rule is uniform,
so the interaction between $v$ and $v+d$ has a weight $w(d)\in\{\pm1\}$
that does not depend on $v$. It is \emph{symmetric} if the interaction
between two cells has the same weight in both directions, that is,
$w(d)=w(-d)$. A symmetric rule is therefore determined by three signs:
the weight $a$ of the cell itself, the weight $h$ of its horizontal
neighbors and the weight $b$ of its vertical neighbors.

\begin{definition}\label{def:signed}
For $a,h,b\in\{-1,+1\}$, the \emph{signed rule} $G_{a,h,b}$ on
$\{-1,+1\}^{V_n}$ is
\[
 G_{a,h,b}(\sigma)_v=\operatorname{sign}\Bigl(a\,\sigma_v
 +h\bigl(\sigma_{v+e_1}+\sigma_{v-e_1}\bigr)
 +b\bigl(\sigma_{v+e_2}+\sigma_{v-e_2}\bigr)\Bigr).
\]
\end{definition}

The argument of the sign is a sum of five terms in $\{-1,+1\}$, so it is
never zero and $G_{a,h,b}$ is well defined. The majority rule is
$F=G_{+,+,+}$. Other members of the family are the \emph{minority rule}
$G_{-,-,-}=-F$; the \emph{antiferromagnetic rule} $G_{+,-,-}$; the
\emph{crazy-spin rule} $G_{-,+,+}$, in which a cell follows the majority
of its four neighbors when they are not tied and flips its state when
they split two against two; and the anisotropic rules with $h\ne b$,
which are ferromagnetic along one axis and antiferromagnetic along the
other.

The problems \textsc{Prediction} and \textsc{Stability} are defined
for $G_{a,h,b}$ as in Section~\ref{subsec:prediction-problems}, with $G_{a,h,b}$
in place of $F$; for \textsc{Prediction} the question is whether
$G_{a,h,b}^T(\sigma)_{(0,0)}=+1$. Both problems belong to $\PP$. Indeed, $G_{a,h,b}$ is a threshold network with a symmetric
integer weight matrix, and the period-two theorem of Goles and
Olivos~\cite{goles1980periodic} and the transient bound
of~\cite{goles1985decreasing} use only the symmetry of the weights, not
their signs. Since the sum of the absolute values of the weights is $5n^2$, they give the analogue
of the result of \cite{goles1980periodic,goles1985decreasing}  for $G_{a,h,b}.$

Hardness follows from \emph{switching}, the operation on signed graphs
that reverses the sign of every edge with exactly one endpoint in a
given set of vertices~\cite{zaslavsky1982signed}. On configurations, it
amounts to reversing the spins of the cells in that set. For
$\varepsilon\in\{-1,+1\}^{V_n}$ and $\sigma\in\{-1,+1\}^{V_n}$, let
$\varepsilon\sigma$ denote the configuration
$(\varepsilon\sigma)_v=\varepsilon_v\sigma_v$. For $p,q\in\{-1,+1\}$,
define
\[
 \varepsilon^{p,q}_{(i,j)}=p^{\,i}q^{\,j}\qquad\text{for }(i,j)\in V_n.
\]
This is well defined on the torus when $n$ is even, or when $p=q=+1$, in
which case $\varepsilon^{+,+}$ is constantly $+1$. In both cases,
$\varepsilon^{p,q}_{(0,0)}=+1$, $\varepsilon^{p,q}_{v\pm
e_1}=p\,\varepsilon^{p,q}_v$ and $\varepsilon^{p,q}_{v\pm
e_2}=q\,\varepsilon^{p,q}_v$.

\begin{lemma}\label{lem:switching}
Let $a,h,b\in\{-1,+1\}$ and $\varepsilon=\varepsilon^{ah,ab}$, and
assume that $\varepsilon$ is well defined. Then, for every
$\sigma\in\{-1,+1\}^{V_n}$ and every $t\ge0$,
\[
 G_{a,h,b}^t(\varepsilon\sigma)=a^t\,\varepsilon F^t(\sigma).
\]
\end{lemma}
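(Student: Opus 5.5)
Since both sides are defined by iterating a single map, the natural route is induction on $t$. The case $t=0$ is trivial, since $a^0=1$. All of the content lies in a one-step identity
\[
 G_{a,h,b}(\varepsilon\sigma)=a\,\varepsilon F(\sigma)\qquad\text{for every }\sigma\in\{-1,+1\}^{V_n}.
\]
Once this identity is in hand, the inductive step goes as follows. Assume $G^t_{a,h,b}(\varepsilon\sigma)=a^t\varepsilon F^t(\sigma)$. Apply $G_{a,h,b}$ and use that it commutes with global negation: the argument of $\operatorname{sign}$ is linear in the configuration, so $G_{a,h,b}(-\tau)=-G_{a,h,b}(\tau)$, and hence $G_{a,h,b}(a^t\tau)=a^tG_{a,h,b}(\tau)$. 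Then apply the one-step identity with $F^t(\sigma)$ in place of $\sigma$. This gives $a^{t+1}\varepsilon F^{t+1}(\sigma)$.

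To prove the one-step identity, fix $v$ and write $\varepsilon=\varepsilon^{p,q}$ with $p=ah$ and $q=ab$. Use the recorded relations $\varepsilon_{v\pm e_1}=p\,\varepsilon_v$ and $\varepsilon_{v\pm e_2}=q\,\varepsilon_v$. The argument of the sign in $G_{a,h,b}(\varepsilon\sigma)_v$ becomes
\[
 \varepsilon_v\Bigl(a\sigma_v+hp(\sigma_{v+e_1}+\sigma_{v-e_1})+bq(\sigma_{v+e_2}+\sigma_{v-e_2})\Bigr).
\]
Since $h^2=b^2=1$, we have $hp=a$ and $bq=a$. The bracket is therefore $a$ times the five-term majority sum $S_v(\sigma)=\sigma_v+\sum_{u\sim v}\sigma_u$. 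The sign is multiplicative and $S_v(\sigma)\neq0$, so the value is $\varepsilon_v\,a\,\operatorname{sign}(S_v(\sigma))=a\,\varepsilon_vF(\sigma)_v$.

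The only real care point is well-definedness on the torus. The relations $\varepsilon_{v\pm e_1}=p\varepsilon_v$ must hold also across the wrap-around, for instance from $(n-1,j)$ to $(0,j)$. This requires $p^n=1$, which holds when $n$ is even or $p=+1$, and likewise for $q$. This is exactly the hypothesis that $\varepsilon$ is well defined, together with the neighbor relations stated just before the lemma, so we will simply invoke them. The only other thing to keep straight is the convention that $e_1$ is the "horizontal" direction carrying weight $h$, matching Definition~\ref{def:signed}. Beyond these points I expect no obstacle; the argument is a direct sign computation.
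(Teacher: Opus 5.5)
Your proof is correct and follows the paper's argument: you establish the one-step identity from $h\varepsilon_{v\pm e_1}=b\varepsilon_{v\pm e_2}=a\varepsilon_v$ and then induct on $t$. The only cosmetic difference is in the inductive step, where you pull the factor $a^t$ through $G_{a,h,b}$ using its oddness, whereas the paper applies the one-step identity to $a^tF^t(\sigma)$ and then uses the oddness of $F$; both are immediate because $\operatorname{sign}$ is odd.
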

\begin{proof}
We first prove the case $t=1$. Let $v\in V_n$. Since
$h\,\varepsilon_{v\pm e_1}=h\cdot ah\,\varepsilon_v=a\varepsilon_v$ and,
likewise, $b\,\varepsilon_{v\pm e_2}=a\varepsilon_v$, every term in the
definition of $G_{a,h,b}(\varepsilon\sigma)_v$ carries the factor
$a\varepsilon_v$:
\[
 G_{a,h,b}(\varepsilon\sigma)_v
 =\operatorname{sign}\Bigl(a\varepsilon_v\bigl(\sigma_v
 +\textstyle\sum_{u\sim v}\sigma_u\bigr)\Bigr)
 =a\,\varepsilon_v F(\sigma)_v.
\]
For the induction step, note that $F$ is odd, $F(-\sigma)=-F(\sigma)$,
since the sign function is odd. Assuming the statement for $t$,
\[
 G_{a,h,b}^{t+1}(\varepsilon\sigma)
 =G_{a,h,b}\bigl(\varepsilon\,a^tF^t(\sigma)\bigr)
 =a\,\varepsilon F\bigl(a^tF^t(\sigma)\bigr)
 =a^{t+1}\varepsilon F^{t+1}(\sigma).\qedhere
\]
\end{proof}

In words, after reversing the spins on the sublattice selected by
$\varepsilon$ and, when $a=-1$, at every odd time step, the dynamics of
any signed symmetric rule is exactly the majority dynamics.

\begin{theorem}\label{thm:signed}
For every $a,h,b\in\{-1,+1\}$, \textsc{Prediction} for $G_{a,h,b}$ is
$\PP$-complete. If $a=+1$, \textsc{Stability} for $G_{a,h,b}$ is
$\PP$-complete as well.
\end{theorem}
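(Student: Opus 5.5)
Membership in $\PP$ has already been noted above: $G_{a,h,b}$ is a symmetric threshold network, so the period-two theorem and the transient bound apply. It therefore suffices to reduce \textsc{Majority Prediction} (and, for $a=+1$, \textsc{Majority Stability}) to the corresponding problem for $G_{a,h,b}$, and then invoke Theorem~\ref{thm:majority-prediction}, Theorem~\ref{thm:stability} and Lemma~\ref{lem:logspace}(1). The reduction will be the switching of Lemma~\ref{lem:switching}. Given an instance $(x,v,T)$ produced by the reduction of Section~\ref{sec:main-proof}, set $\varepsilon=\varepsilon^{ah,ab}$ and map it to $(\varepsilon x,\,0,\,T')$ with a suitably corrected target.

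First I make $\varepsilon$ well defined. When $ah=ab=+1$, that is, $h=b=a$, $\varepsilon$ is constant and there is nothing to check. Otherwise we need $n$ even. The construction of Lemma~\ref{lem:layout} already uses an even torus side, and the enlargement in the proof of Theorem~\ref{thm:stability} can also be rounded up to an even value. So I restrict the source problems to these even-$n$ instances, which remain $\PP$-hard, and the target cell is the origin after translation. Lemma~\ref{lem:switching} then gives
\[
G^T_{a,h,b}(\varepsilon x)_{(0,0)}=a^T\varepsilon_{(0,0)}F^T(x)_{(0,0)}=a^TF^T(x)_{(0,0)}.
\]
For $a=+1$ the answers agree immediately. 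For $a=-1$ we need $T$ even, or else a sign flip. The question asked is fixed ("is the cell $+1$?"), so I will not negate the answer. Instead I will adjust the parity of $T$. One option is to insert, at the end of the output route of block $(m,m)$, the constant-size phase-adjustment piece of Section~\ref{subsec:phase-converter}, which shifts the reading time by one update, and use it exactly when $T$ is odd. The simpler alternative is to read two steps earlier or later at a neighbouring channel cell, since by \eqref{eq:wire-reading-times} the adjacent channel cell has reading time $T\pm1$. With the neighbouring channel cell $v'$ on the output wire at reading time $T+1$, I query $v'$ at the even time among $T,T+1$. The switched target is $\varepsilon_{v'}\in\{\pm1\}$, and this factor I can absorb by translating so that the queried cell is the origin, giving $\varepsilon_{(0,0)}=1$.

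Logspace computability is routine. The map $x\mapsto\varepsilon x$ multiplies each spin by $p^iq^j$, which is computable from the coordinate parities, so Lemma~\ref{lem:logspace}(2) applies.

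For \textsc{Stability} with $a=+1$, Lemma~\ref{lem:switching} gives $G^t(\varepsilon\sigma)_{(0,0)}=F^t(\sigma)_{(0,0)}$ for every $t$. The existence of some $t$ with value $+1$ is therefore preserved exactly, and Theorem~\ref{thm:stability} transfers directly, subject only to the evenness of $n$ noted above. I expect the main obstacle to be the $a=-1$ prediction case. There the factor $a^T$ forces careful control of the parity of the reading time together with the switching sign at the target cell, and I must make sure the parity fix does not break the encoding~\eqref{eq:temporal-code}. Reading at an adjacent channel cell whose reading time has the right parity seems the cleanest way to do this.
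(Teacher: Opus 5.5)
Your reduction matches the paper's in two places: the $a=+1$ case for both problems, and making $n$ even. It differs in how you remove the factor $a^T$ when $a=-1$ and $T$ is odd.

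You are right not to negate the answer. But you then change the geometry: either you insert a phase-adjustment piece, or you query a channel cell next to the output port at reading time $T\pm1$ and re-translate. The paper negates the \emph{input} instead. Since $F$ is odd, mapping $\sigma$ to $\sigma'=a^T\varepsilon\sigma$ gives $G^T_{a,h,b}(\sigma')=a^T\varepsilon F^T(a^T\sigma)=a^{2T}\varepsilon F^T(\sigma)=\varepsilon F^T(\sigma)$. So the origin is queried at the same time $T$. This argument is purely algebraic, applies to every majority instance with $n$ even, and needs nothing about the gadgets.

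Your route can be made to work, but it costs more:
\begin{itemize}
\item It relies on facts the paper asserts only informally: that the temporal code~\eqref{eq:temporal-code} holds at the cell just upstream or downstream of the east port of block $(m,m)$, next to the termination, or that a suitable phase-adjustment piece exists.
\item Your rule ``query $v'$ at the even time among $T,T+1$'' needs to become an explicit case split. Keep $(v,T)$ if $T$ is even. If $T$ is odd, use a neighbouring channel cell at reading time $T+1$ or $T-1$. Those neighbours are one update away, not two.
\end{itemize}

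For the instances you actually start from, the issue does not arise. There $T=T_{\mathrm{src}}+2m(\Delta+1)=306+2m\cdot 299$ is always even, so $a^T=1$ already. The paper's algebraic fix is still preferable because it does not depend on this property of the particular construction.
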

\begin{proof}
Membership was shown above. For hardness, we first observe that the
reductions of Lemma~\ref{lem:layout} and Theorem~\ref{thm:stability}
can be assumed to produce an even side $n$: if $n$ is odd, we add one row and one column of cells in
state $-1$ to the margin. Hence $\varepsilon=\varepsilon^{ah,ab}$ is
well defined on all instances, and $\varepsilon_{(0,0)}=+1$.

For \textsc{Prediction}, map an instance $(n,\sigma,T)$ of the majority
rule to $(n,\sigma',T)$ with $\sigma'=a^T\varepsilon\sigma$. By
Lemma~\ref{lem:switching} and the oddness of $F$,
\[
 G_{a,h,b}^T(\sigma')
 =a^T\varepsilon F^T(a^T\sigma)
 =a^{2T}\varepsilon F^T(\sigma)
 =\varepsilon F^T(\sigma),
\]
so $G_{a,h,b}^T(\sigma')_{(0,0)}=F^T(\sigma)_{(0,0)}$. For
\textsc{Stability} with $a=+1$, map $(n,\sigma)$ to
$(n,\varepsilon\sigma)$. By Lemma~\ref{lem:switching}, the state of the
origin at every time $t\ge0$ is then $F^t(\sigma)_{(0,0)}$.

In both cases, the bit of the new configuration at $(i,j)$ is the bit
of $\sigma$ at $(i,j)$, complemented according to the parities of $i$,
$j$ and, for \textsc{Prediction}, of $T$. These parities are read from
the binary representations, so the composed reduction remains logspace
computable by Lemma~\ref{lem:logspace}.
\end{proof}

For the minority rule, $ah=ab=+1$, so $\varepsilon$ is constantly $+1$,
no parity condition on $n$ is needed, and Lemma~\ref{lem:switching} reduces
to the identity $G_{-,-,-}^t=(-1)^tF^t$. For the antiferromagnetic and
crazy-spin rules, $\varepsilon$ is the checkerboard
$(-1)^{i+j}$.

\section{Discussion}
\label{sec:discussion}

We show that a homogeneous, diffusive local rule can preserve and cross independent streams of information in two dimensions. The key in this construction was to encode information in time. Even though the rule is monotone, this temporal encoding gets around the routing obstruction without changing the interaction graph and without using different local rules at different cells.

As a consequence, the prediction problem for the synchronous, tie-preserving majority rule on the von Neumann neighborhood is settled. However, several natural variants remain open.

\paragraph{Neighborhoods, tie-breaking, and updating schemes.}
Three modifications are particularly interesting. First, one can
replace the von Neumann neighborhood by the \emph{Moore neighborhood},
which consists of the eight surrounding cells, including the four
diagonal neighbors. The additional interactions change both the
possibilities for routing signals and the conditions under which
supporting patterns remain stable. Our construction therefore does
not immediately extend to this neighborhood.

One can also consider the \emph{Toom neighborhood}: this "L"-shaped neighborhood in which each cell has a north and east neighbor was studied in the freezing case by \cite{conchavega2026lshaped} and it was found to have an NC prediction problem. Nevertheless, to our best knowledge, the prediction problem for the strict majority rule remains open.

Second, one can replace the tie-preserving rule by a
\emph{biased majority rule}: whenever the neighboring votes are
evenly split, the cell adopts a fixed preferred state, either $+1$
or $-1$, regardless of its current state. The same preference is
used at every cell. The two choices of preferred state are related
by exchanging $+1$ and $-1$ throughout the configuration.

Third, one can replace synchronous updates by a \emph{sequential
updating scheme}. Here the input includes a permutation
$\pi=(v_1,\ldots,v_{n^2})$ of the cells. Each sweep updates the cells
one at a time in this order, using the most recently available
neighboring states, and the same permutation is repeated in every
sweep. Prediction then asks for the state of a designated cell
after a specified number of complete sweeps. In particular, the
schedule is given as part of the input; it is not chosen
nondeterministically during the evolution.

These modifications can be considered individually or in any
combination. Does prediction remain $\mathrm{P}$-complete, or do
some of these variants admit an efficient parallel algorithm?
We conjecture that the hardness persists throughout this family.

\paragraph{Conjecture.}
For each choice of  von Neumann,  Moore neighborhood,  tie-preserving
or uniformly biased majority rule, and synchronous or sequential
updating, the prediction problem on an explicitly specified
two-dimensional torus is $\mathrm{P}$-complete under logspace
many-one reductions. In the sequential case, the updating permutation is part of the input. Observe that this matters because the variant that asks whether there exists an update scheme making a given cell change its state has already been studied \cite{goles2014updating}.

The challenge is to determine how robust the information-processing
mechanisms developed here are under these changes. Enlarging the
neighborhood introduces additional interactions, changing the
tie-breaking rule modifies the supporting dynamics, and sequential
updates alter the timing on which the gadgets rely. A proof of the
conjecture would establish that the computational power exhibited
here is not confined to one particular combination of these choices.

\paragraph{Connections with sandpile prediction.}
The study of biased-majority dynamics also invites comparison
with prediction problems for the two-dimensional Abelian sandpile
model. In this model, each cell carries a nonnegative integer
number of grains. A cell containing at least four grains is
unstable and may \emph{topple}, sending one grain to each of its
four nearest neighbors and losing four grains itself. On a finite
square with dissipative boundary conditions, grains sent outside
the square are lost. Repeated legal topplings eventually produce
a stable configuration, independently of the order in which
unstable cells are selected~\cite{moore1999sandpiles}.

Prediction problems for sandpiles concern different aspects of
this evolution. One may ask about the final configuration after
stabilization, or whether a designated cell ever topples.
A common avalanche formulation starts from a stable
configuration, adds one grain at a specified cell, and asks
whether the resulting avalanche reaches a given
target~\cite{formenti2020sandpiles}. Under synchronous toppling,
one may instead ask whether the target topples at a prescribed
time. This timed formulation retains information that is not
captured by the eventual-toppling
question~\cite{conchavega2025timed}.

The complexity of sandpile prediction exhibits a striking
dependence on dimension. Moore and Nilsson established
$\mathrm{P}$-completeness in dimension three and higher and
membership in $\mathrm{NC}$ in dimension one, leaving the
two-dimensional case unresolved~\cite{moore1999sandpiles}.
Despite progress on related models and neighborhoods, the
complexity of the classical prediction problem for the standard
two-dimensional von Neumann sandpile remains
open~\cite{formenti2020sandpiles,conchavega2025timed}.

These questions provide a broader motivation for investigating
biased-majority prediction. Although the update mechanisms
differ, both settings invite questions about how local threshold
interactions can support the propagation and combination of
information. Signal crossing and synchronization provide
particularly natural points of comparison. Understanding which
parts of our construction can be adapted to biased tie-breaking
may also offer insights into the difficulties encountered in
sandpile models. More generally, exploring these parallels could
help clarify the scope and limitations of temporal encodings,
and distinguish mechanisms specific to majority dynamics from
those that may be useful in other lattice-based systems.

\section*{Acknowledgements}
The writing of this paper and the derivation of its results were assisted by AI agents. We used Claude Code (Anthropic) with the models Claude Sonnet 5.5 and Claude Opus 5.5, and Codex (OpenAI) with the models GPT-6 Astra and GPT-6.1 Sol. The authors reviewed all text and results and take full responsibility for them.

\bibliographystyle{plain}
\bibliography{biblio}

\end{document}

%% file: gadgets/wire-idle.tex
\begin{figure}[tbp]
\centering
\includegraphics[width=\linewidth]{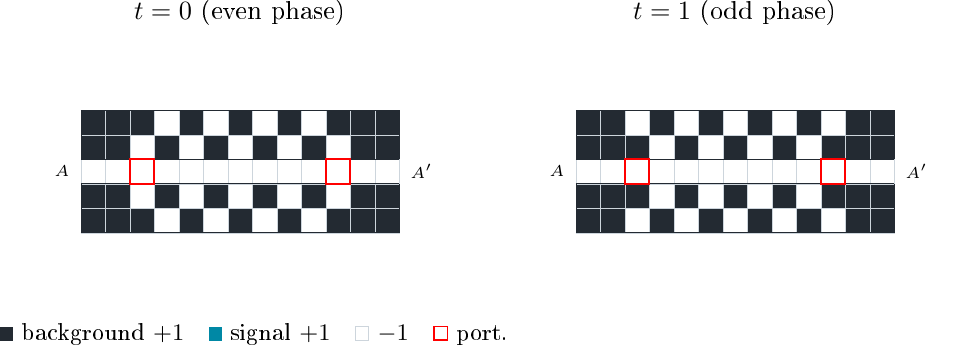}
\caption{Straight wire without input sources, at $t=0$ and $t=1$. The two frames show the phases of the autonomous idle background. All cells obey synchronous majority of self and the four nearest neighbors; no cells are externally fixed.}
\label{fig:wire-idle}
\end{figure}

%% file: gadgets/wire-dynamics.tex
\begin{figure}[tbp]
\centering
\includegraphics[width=\linewidth]{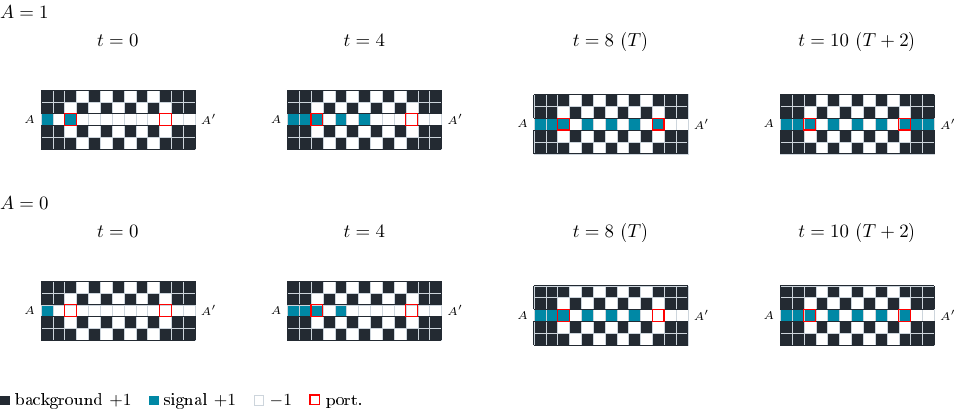}
\caption{Straight wire dynamics. The straight channel has two checkerboard support strips, each two cells wide, with solid end caps. The ports are eight cells apart; the output reproduces $A$ at $T=8$. Rows show $A=1$ and $A=0$, respectively. Outputs are sampled at $T$ and $T+2$. At $T$, positive and negative spins encode one and zero; the late signal is also positive at $T+2$.}
\label{fig:wire-dynamics}
\end{figure}

%% file: gadgets/crossover-idle.tex
\begin{figure}[tbp]
\centering
\includegraphics[width=\linewidth]{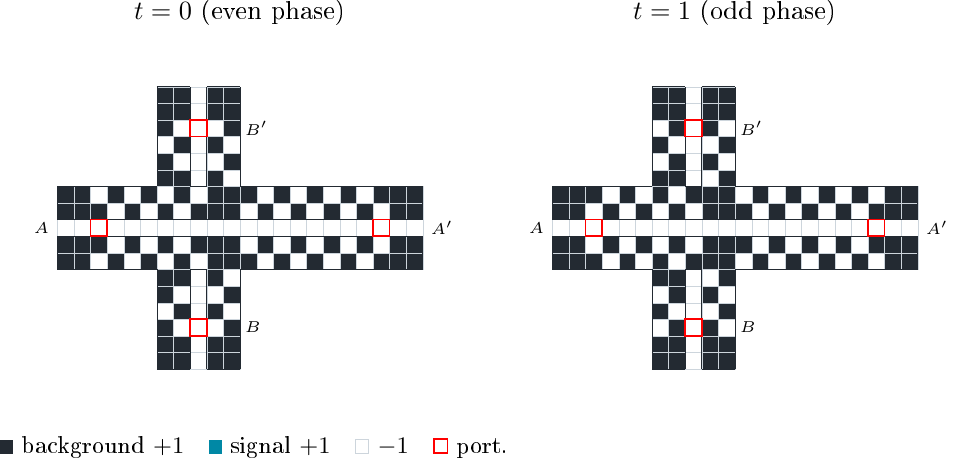}
\caption{Temporal crossover without input sources, at $t=0$ and $t=1$. The two frames show the phases of the autonomous idle background. All cells obey synchronous majority of self and the four nearest neighbors; no cells are externally fixed.}
\label{fig:crossover-idle}
\end{figure}

%% file: gadgets/crossover-dynamics.tex
\begin{figure}[tbp]
\centering
\includegraphics[width=\linewidth]{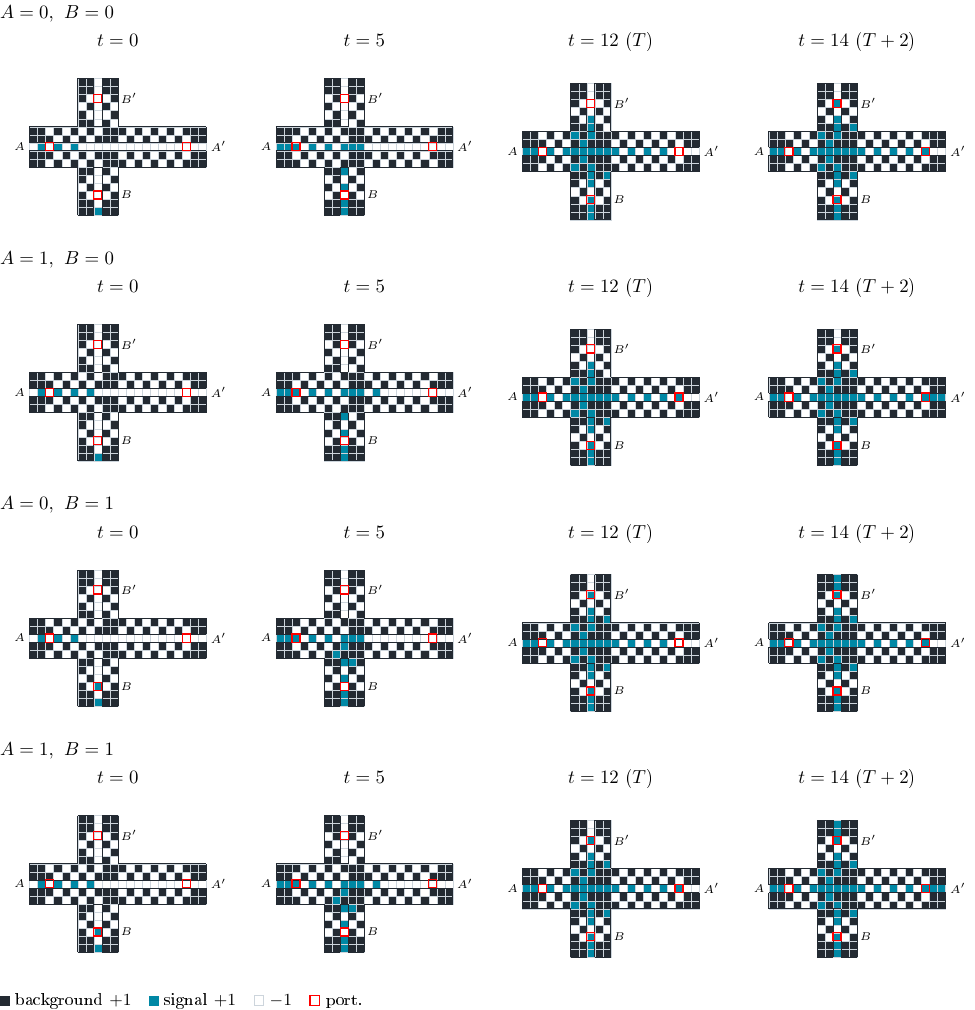}
\caption{Temporal crossover dynamics. The right arm has length thirteen; the other arms have length eight. The left source initially leads by five cells. Both outputs $(A\prime,B\prime)$ are read at the common time $T=12$. Rows are ordered as $(A,B)=(0,0),(1,0),(0,1),(1,1)$. Outputs are sampled at $T$ and $T+2$. At $T$, positive and negative spins encode one and zero; the late signal is also positive at $T+2$.}
\label{fig:crossover-dynamics}
\end{figure}

%% file: gadgets/turn-idle.tex
\begin{figure}[tbp]
\centering
\includegraphics[width=\linewidth]{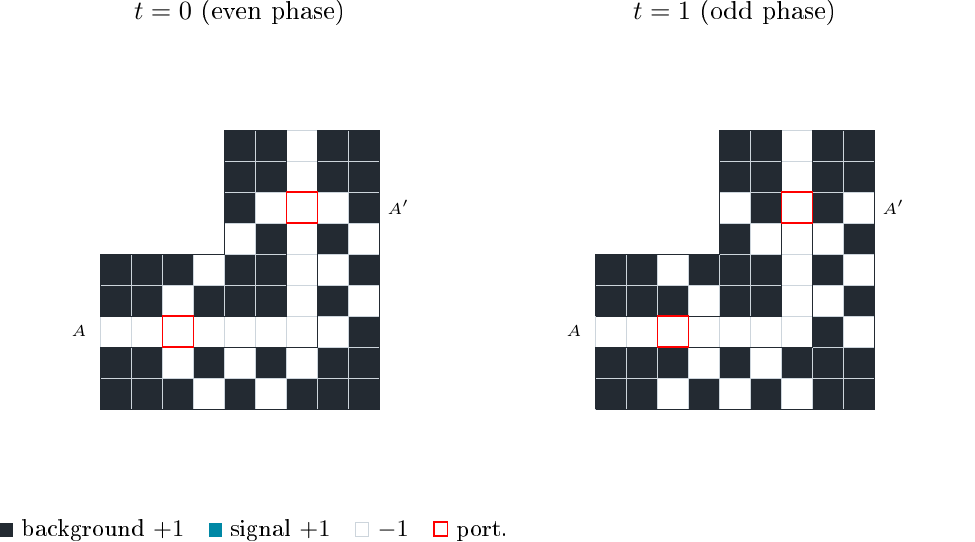}
\caption{90-degree turn without input sources, at $t=0$ and $t=1$. The two frames show the phases of the autonomous idle background. All cells obey synchronous majority of self and the four nearest neighbors; no cells are externally fixed.}
\label{fig:turn-idle}
\end{figure}

%% file: gadgets/turn-dynamics.tex
\begin{figure}[tbp]
\centering
\includegraphics[width=\linewidth]{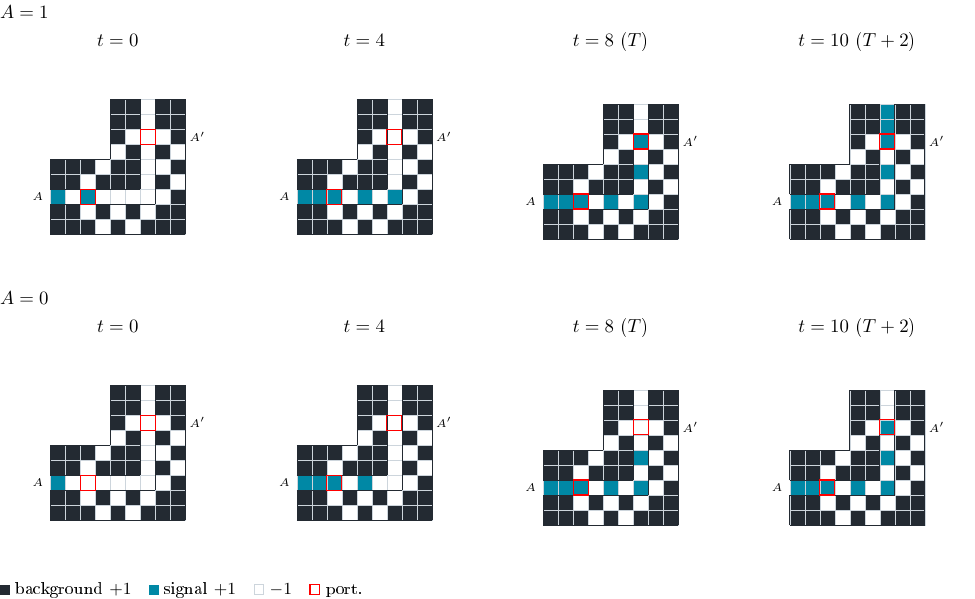}
\caption{90-degree turn dynamics. Each arm extends six cells from the corner, making the bend visible. For $A=1$, the front reaches the corner at $t=4$. The output is read at $T=8$. Rows show $A=1$ and $A=0$, respectively. Outputs are sampled at $T$ and $T+2$. At $T$, positive and negative spins encode one and zero; the late signal is also positive at $T+2$.}
\label{fig:turn-dynamics}
\end{figure}

%% file: gadgets/and-idle.tex
\begin{figure}[tbp]
\centering
\includegraphics[width=\linewidth]{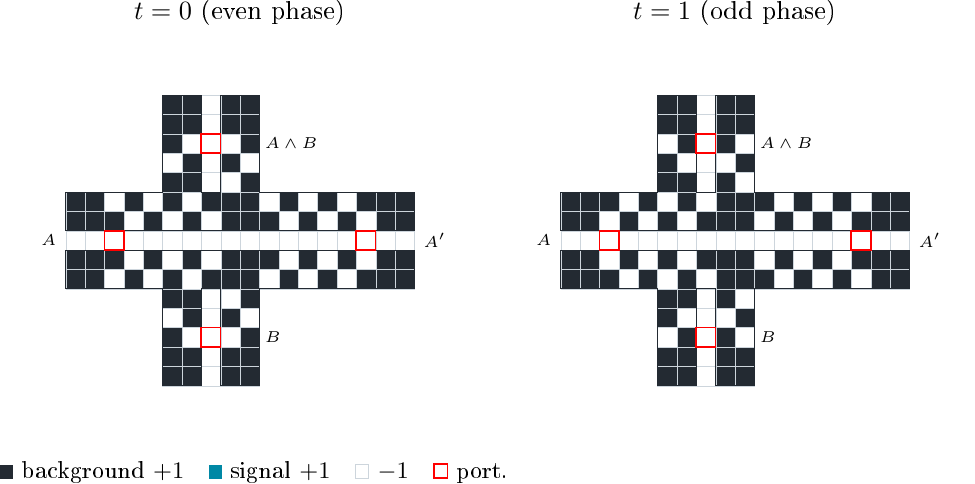}
\caption{AND gate without input sources, at $t=0$ and $t=1$. The two frames show the phases of the autonomous idle background. All cells obey synchronous majority of self and the four nearest neighbors; no cells are externally fixed.}
\label{fig:and-idle}
\end{figure}

%% file: gadgets/and-dynamics.tex
\begin{figure}[tbp]
\centering
\includegraphics[width=\linewidth]{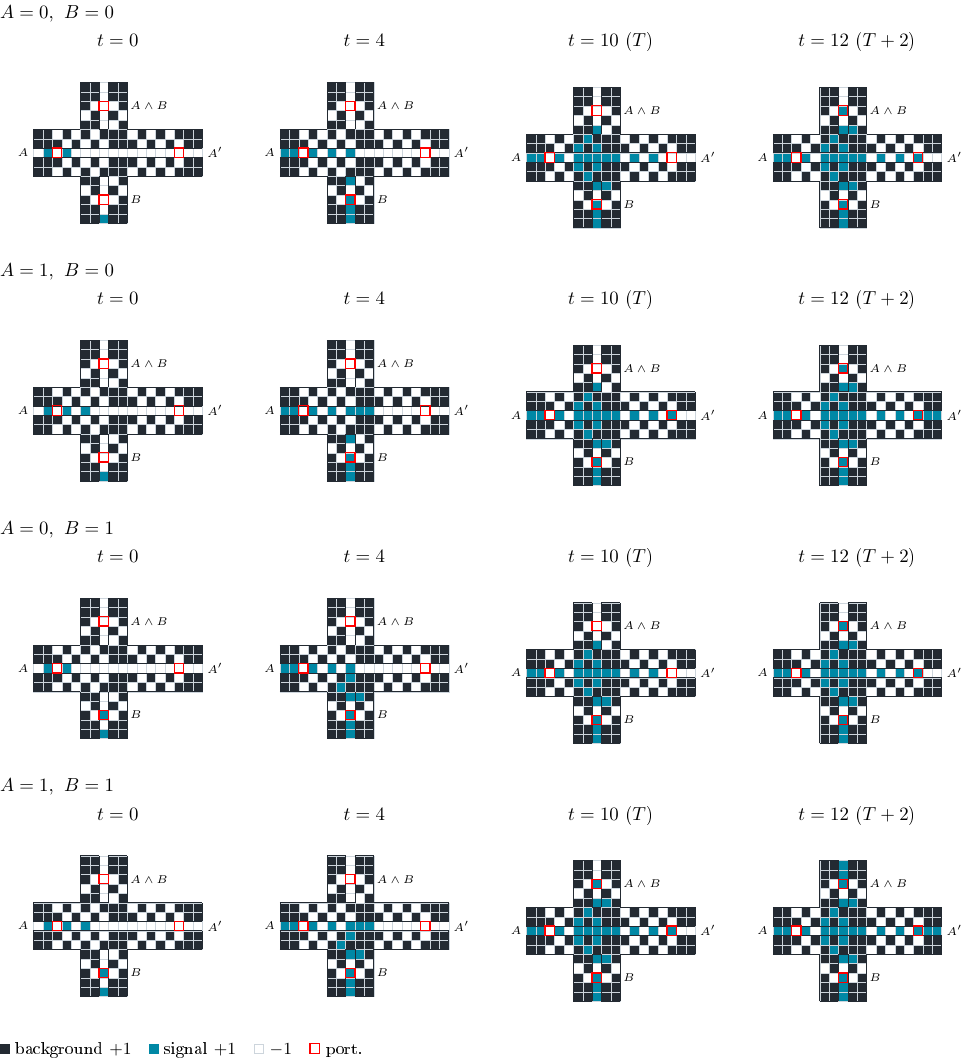}
\caption{AND gate dynamics. The right arm has length ten; the other arms have length seven. The left source initially leads by three cells. Both outputs $(A\prime,A\land B)$ are read at the common time $T=10$. Rows are ordered as $(A,B)=(0,0),(1,0),(0,1),(1,1)$. Outputs are sampled at $T$ and $T+2$. At $T$, positive and negative spins encode one and zero; the late signal is also positive at $T+2$.}
\label{fig:and-dynamics}
\end{figure}

%% file: gadgets/or-idle.tex
\begin{figure}[tbp]
\centering
\includegraphics[width=\linewidth]{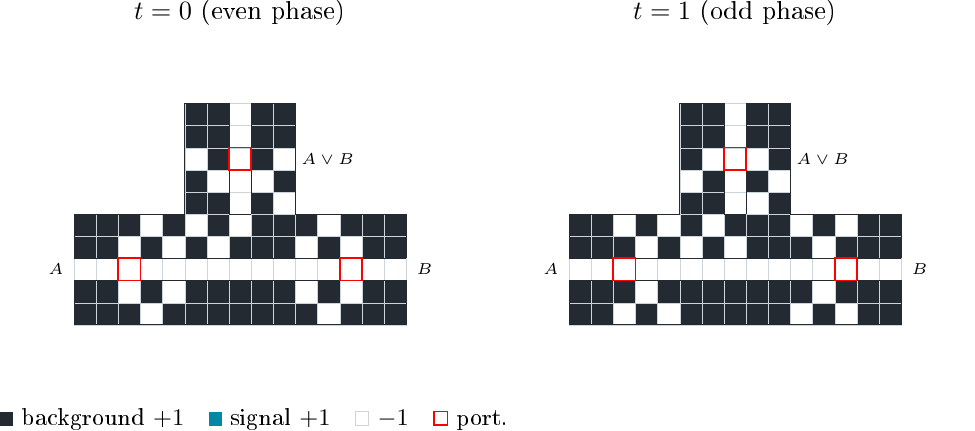}
\caption{OR gate without input sources, at $t=0$ and $t=1$. The two frames show the phases of the autonomous idle background. All cells obey synchronous majority of self and the four nearest neighbors; no cells are externally fixed.}
\label{fig:or-idle}
\end{figure}

%% file: gadgets/or-dynamics.tex
\begin{figure}[tbp]
\centering
\includegraphics[width=\linewidth]{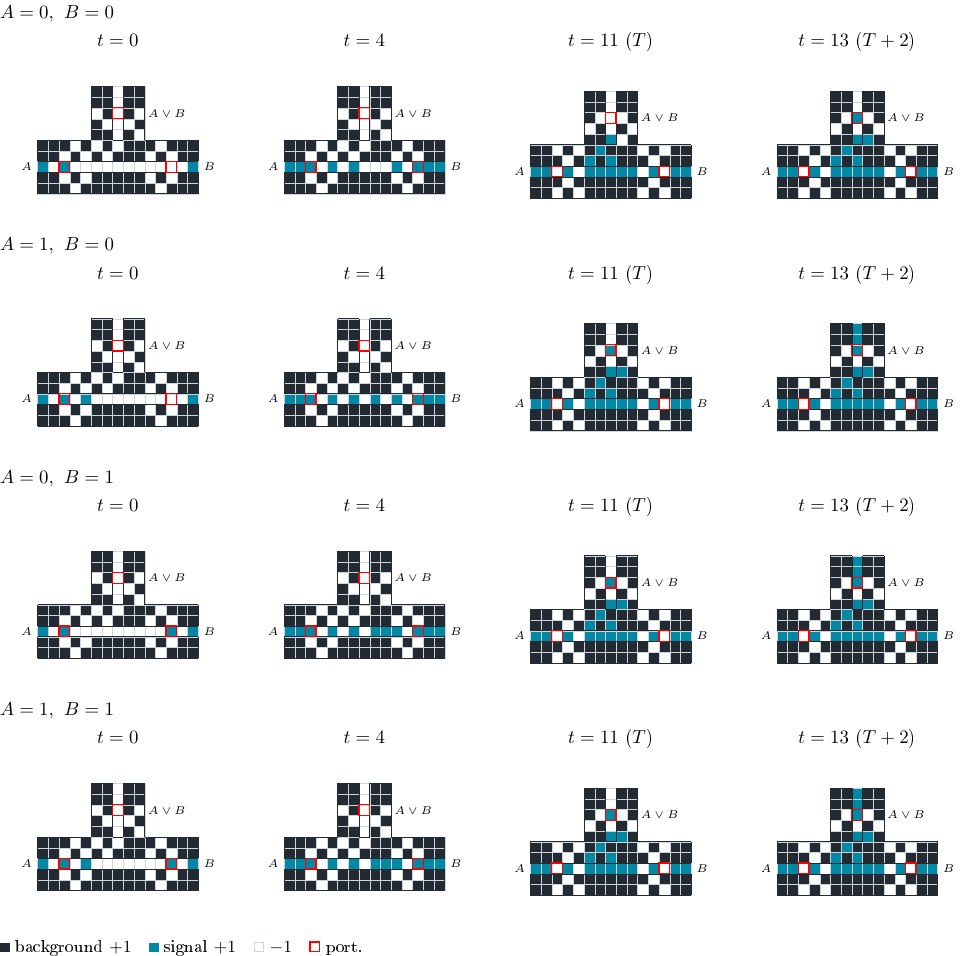}
\caption{OR gate dynamics. The three channel arms have length seven. The left source initially leads by two cells. The output $A\lor B$ is read at $T=11$. Rows are ordered as $(A,B)=(0,0),(1,0),(0,1),(1,1)$. Outputs are sampled at $T$ and $T+2$. At $T$, positive and negative spins encode one and zero; the late signal is also positive at $T+2$.}
\label{fig:or-dynamics}
\end{figure}

%% file: gadgets/copy-idle.tex
\begin{figure}[tbp]
\centering
\includegraphics[width=\linewidth]{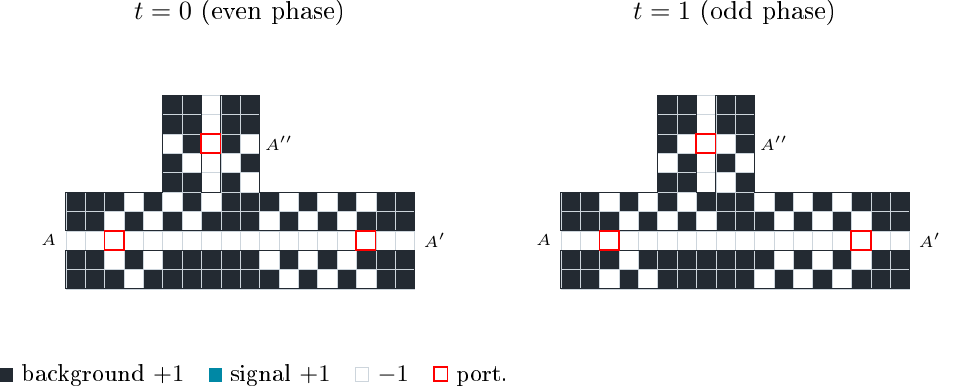}
\caption{Signal duplicator without input sources, at $t=0$ and $t=1$. The two frames show the phases of the autonomous idle background. All cells obey synchronous majority of self and the four nearest neighbors; no cells are externally fixed.}
\label{fig:copy-idle}
\end{figure}

%% file: gadgets/copy-dynamics.tex
\begin{figure}[tbp]
\centering
\includegraphics[width=\linewidth]{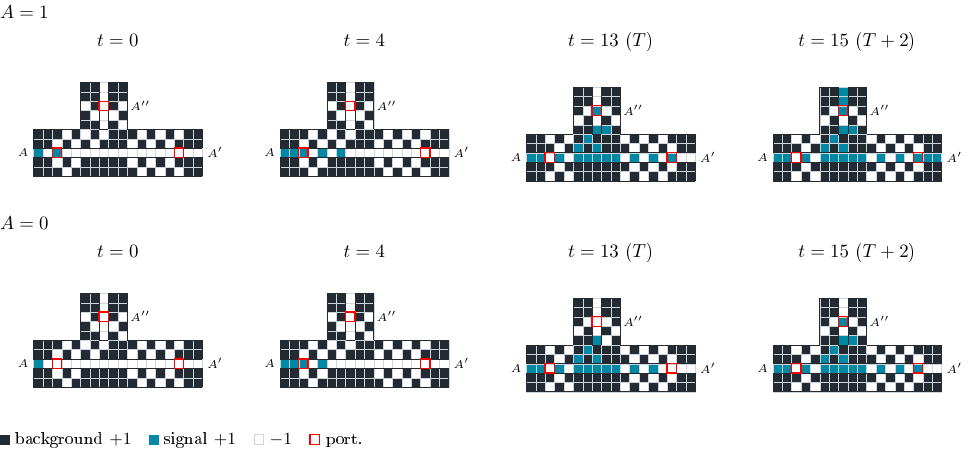}
\caption{Signal duplicator dynamics. The right arm is extended by three cells. Both outputs $(A\prime,A\prime\prime)$ reproduce $A$ at the common time $T=13$. Rows show $A=1$ and $A=0$, respectively. Outputs are sampled at $T$ and $T+2$. At $T$, positive and negative spins encode one and zero; the late signal is also positive at $T+2$.}
\label{fig:copy-dynamics}
\end{figure}

%% file: gadgets/block-and.tex
\begin{figure}[tbp]
\centering
\includegraphics[width=\linewidth]{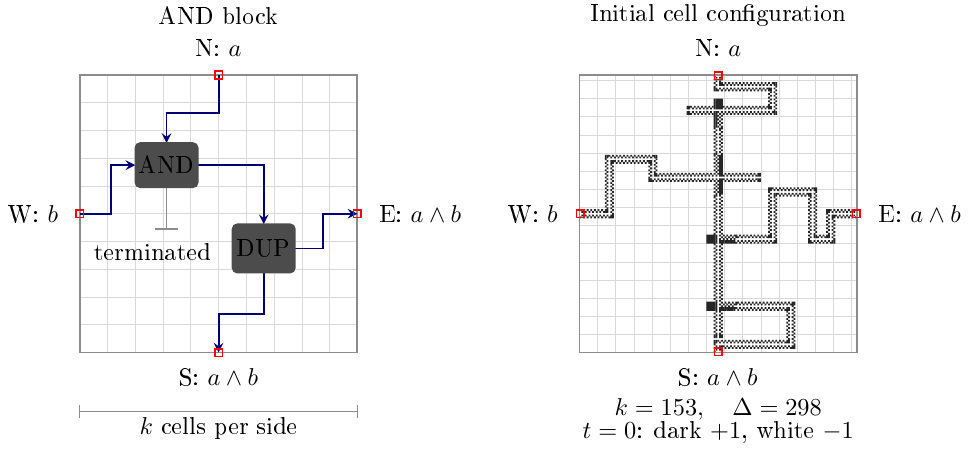}
\caption{Uniform AND block with $(E,S)=(a\land b,a\land b)$. The conjunction output of the elementary AND feeds a duplicator; the unused passthrough output terminates inside the block. Left: logical assembly. Right: the initial spins of the $153\times153$ realization; the input-to-output delay is $\Delta=298$ on both outputs. External sources and sink caps lie outside the square. Port markers are enlarged; the physical grid shows every tenth line.}
\label{fig:block-and}
\end{figure}

%% file: gadgets/block-or.tex
\begin{figure}[tbp]
\centering
\includegraphics[width=\linewidth]{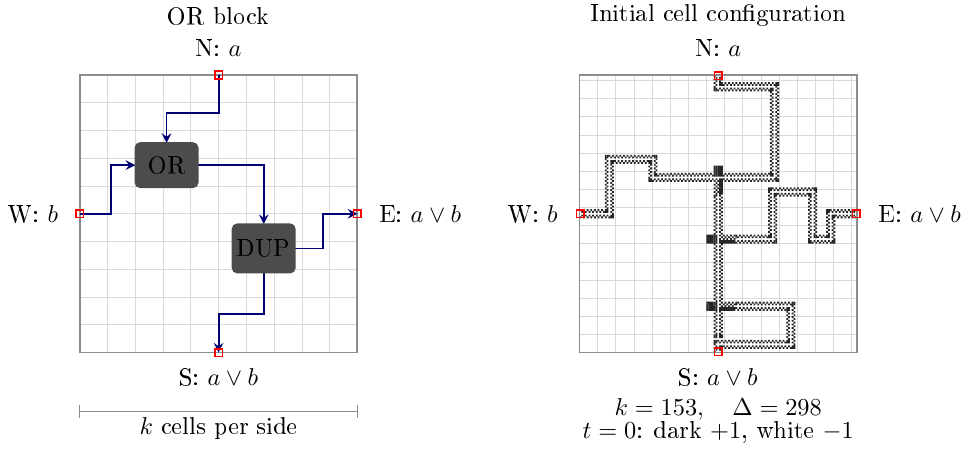}
\caption{Uniform OR block with $(E,S)=(a\lor b,a\lor b)$. The elementary OR feeds a duplicator to supply both output sides. Left: logical assembly. Right: the initial spins of the $153\times153$ realization; the input-to-output delay is $\Delta=298$ on both outputs. External sources and sink caps lie outside the square. Port markers are enlarged; the physical grid shows every tenth line.}
\label{fig:block-or}
\end{figure}

%% file: gadgets/block-crossover.tex
\begin{figure}[tbp]
\centering
\includegraphics[width=\linewidth]{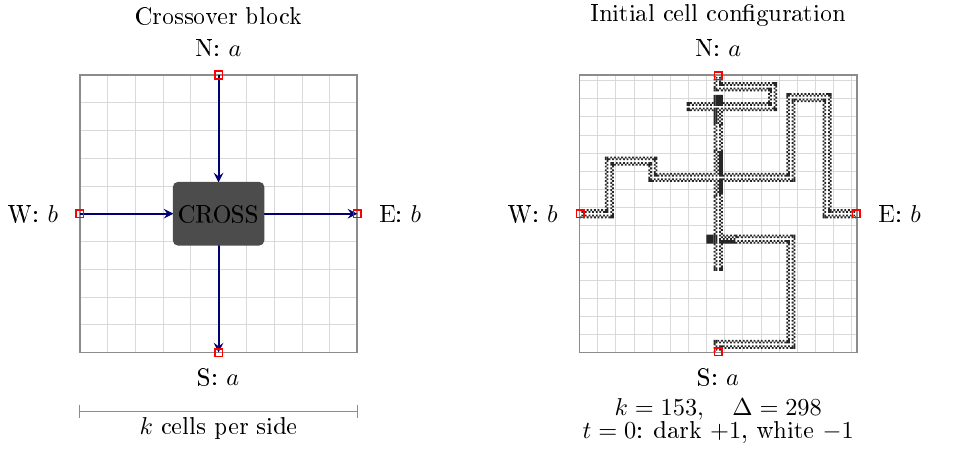}
\caption{Uniform crossover block with $(E,S)=(b,a)$. The north input continues south and the west input continues east. Left: logical assembly. Right: the initial spins of the $153\times153$ realization; the input-to-output delay is $\Delta=298$ on both outputs. External sources and sink caps lie outside the square. Port markers are enlarged; the physical grid shows every tenth line.}
\label{fig:block-crossover}
\end{figure}

%% file: gadgets/block-copy.tex
\begin{figure}[tbp]
\centering
\includegraphics[width=\linewidth]{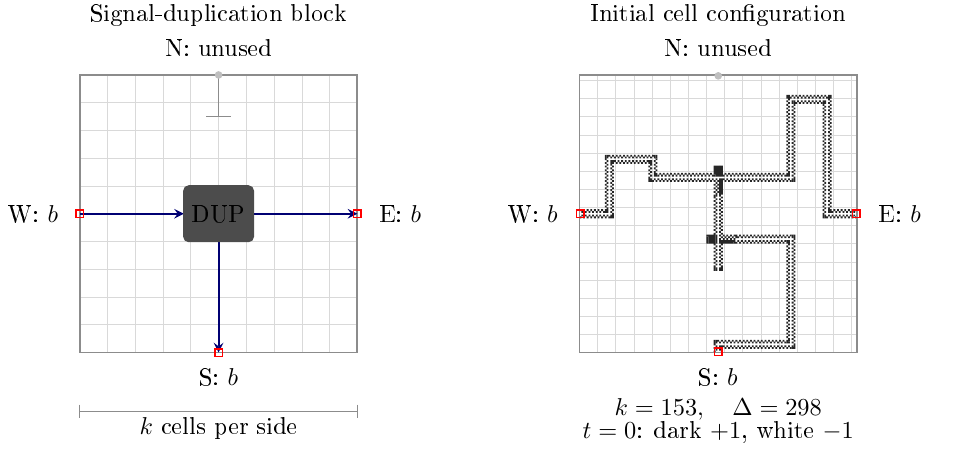}
\caption{Uniform signal-duplication block with $(E,S)=(b,b)$. The west input is used and the north side is closed. Reflection in the north-west/south-east diagonal gives the north-input variant. Left: logical assembly. Right: the initial spins of the $153\times153$ realization; the input-to-output delay is $\Delta=298$ on both outputs. External sources and sink caps lie outside the square. Port markers are enlarged; the physical grid shows every tenth line.}
\label{fig:block-copy}
\end{figure}

%% file: gadgets/fuse-idle.tex
\begin{figure}[!htbp]
\centering
\includegraphics[width=\linewidth]{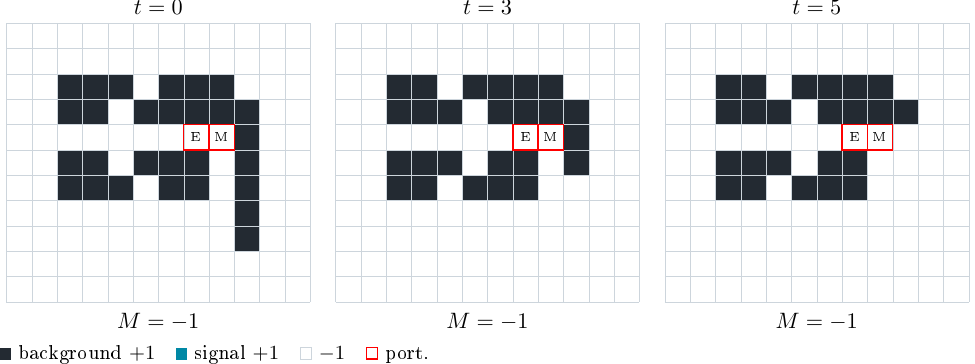}
\caption{Temporal fuse without an input source. The compact $8\times7$ configuration is initially armed, with terminal $E=(-1,0)$ and memory $M=(0,0)$. The autonomous clock erodes even without a signal: it can still accept an early input at $t=3$, but has disabled capture at $t=5$. This is a finite waiting window.}
\label{fig:fuse-idle}
\end{figure}

%% file: gadgets/fuse-dynamics.tex
\begin{figure}[!htbp]
\centering
\includegraphics[width=\linewidth]{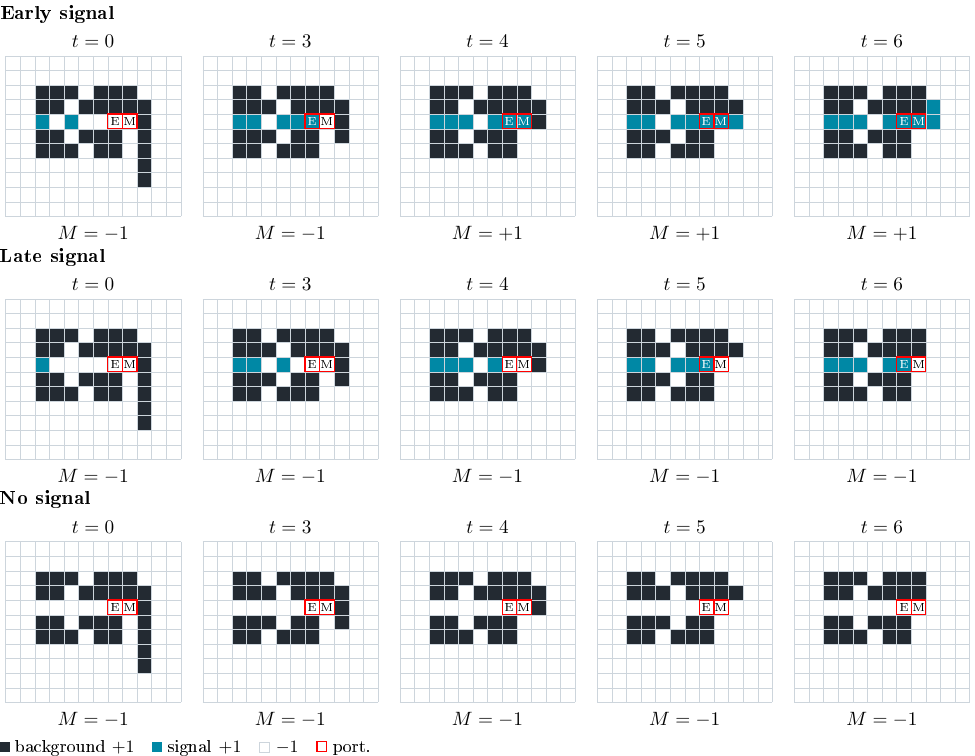}
\caption{Temporal fuse under early, late and absent input signals (top to bottom). An early arrival at $E$ at $t=3$ sets $M=+1$ permanently from $t=4$. A late arrival at $t=5$, or no arrival, leaves $M=-1$ permanently. The columns show initialization, early arrival, memory response, late arrival and the next step.}
\label{fig:fuse-dynamics}
\end{figure}